\documentclass[11pt,a4paper]{article}

\usepackage[a4paper,margin=1in]{geometry}
\usepackage{amsmath, amssymb, amsthm, bm}
\usepackage{graphicx}
\usepackage{booktabs, multirow, tabularx, threeparttable}
\usepackage{natbib}
\usepackage{setspace}
\usepackage{hyperref}
\usepackage{enumitem}
\usepackage{xcolor}
\usepackage{float}
\usepackage{authblk}
\usepackage{mathtools}
\usepackage{rotating}
\usepackage{tikz}
\usepackage[edges]{forest}
\usetikzlibrary{arrows.meta,calc,positioning,fit}
\usepackage{listings}
\usepackage{courier}  

\usepackage{amsthm}
\newtheorem{theorem}{Theorem}[section]
\newtheorem{proposition}[theorem]{Proposition}

\newtheorem{assumption}[theorem]{Assumption}

\theoremstyle{definition}
\newtheorem{definition}[theorem]{Definition}

\theoremstyle{remark}
\newtheorem{remark}{Remark}[section]

\title{Rethinking Individual Risk and Aggregation in Survival Analysis: A Latent Mechanism Framework}
\author{
  Xijia Liu\thanks{
    Department of Statistics, Ume\aa\ University, Sweden.
    \textbf{Email}: \texttt{xijia.liu@umu.se}
  }
}
\date{}

\providecommand{\keywords}[1]{\textbf{\textit{Keywords---}} #1}

\begin{document}
\maketitle
\begin{abstract}
    Survival analysis provides a well-established framework for modeling time-to-event data, with hazard and survival functions formally defined as population-level quantities.
    In applied work, however, these quantities are often interpreted as representing individual-level risk, despite the absence of a clear generative account linking individual risk mechanisms to observed survival data.
    This paper develops a latent hazard framework that makes this relationship explicit by modeling event times as arising from unobserved, individual-specific hazard mechanisms and viewing population-level survival quantities as aggregates over heterogeneous mechanisms.
    Within this framework, we show that individual hazard trajectories are not identifiable from survival data under partial information.
    More generally, the conditional distribution of latent hazard mechanisms given covariates is structurally non-identifiable, even when population-level survival functions are fully known.
    This non-identifiability arises from the aggregation inherent in survival data and persists independently of model flexibility or estimation strategy.
    Finally, we show that classical survival models can be systematically reinterpreted according to how they handle this unresolved conditional mechanism distribution.
    This paper provides a unified framework for understanding heterogeneity, identifiability, and interpretation in survival analysis, and clarifies how population-level survival models should be interpreted when individual risk mechanisms are only partially observed, thereby establishing explicit information constraints for principled modeling and inference.
\end{abstract}
\keywords{
Survival analysis; 
hazard function; 
heterogeneity; 
identifiability; 
latent mechanisms; 
population versus individual risk}

\section{Introduction}
Survival analysis has achieved remarkable empirical success over the past five decades and has become a central tool in biomedical research, epidemiology, and reliability studies.
Its fundamental quantities, the hazard function and the survival function, are defined with mathematical rigor and stability, and have supported extensive methodological development and applied practice.
However, alongside this success, a persistent interpretational tension has remained present but has never been systematically clarified.

Within the classical theoretical framework, hazard and survival functions are defined as population-level statistical objects, characterizing failure rates and survival probabilities within hypothetical cohorts sharing the same observed covariates.
In applied settings, however, these same quantities are often naturally endowed with individual-level interpretations and are used to describe individual risk states, latent biological processes, or uncertainty about future outcomes.
This issue becomes particularly salient as the focus of survival analysis has increasingly shifted from inference-oriented to prediction-oriented objectives, and has become difficult to avoid in contexts such as individualized prediction, risk stratification, and decision support.
More importantly, this tension is not confined to any particular class of models or methods, but reflects a structural information constraint: survival data inherently aggregate over heterogeneous, unobserved hazard mechanisms, and this aggregation induces a many-to-one mapping from individual-level risk processes to observable population-level quantities.
Clarifying this issue is therefore of foundational importance for understanding the interpretational boundaries of survival models, the meaning of prediction, the feasibility of heterogeneity characterization, and for providing principled guidance for the design of future predictive models.

\subsection{Motivating Challenges in Survival Analysis}
The source of the interpretational tension outlined above is not immediately apparent at the level of formal definitions.
In classical survival theory, hazard and survival functions are rigorously defined as population-level quantities.
In particular, the standard hazard
\[
h(t | X) = \lim_{\Delta t \to 0}
\frac{\Pr(t \le T < t+\Delta t | T \ge t, X)}{\Delta t},
\]
characterizes the instantaneous failure rate among individuals who have survived up to time \(t\) within a hypothetical cohort sharing the same observed covariates.
By construction, this quantity is conditional on membership in the risk set and is defined only at the level of such populations, rather than as an intrinsic property of a single subject.
Nevertheless, in applied contexts, both classical and modern survival models are often discussed, reported, or informally interpreted in ways that suggest an individual-level notion of risk, even though \(h(t| X)\) and \(S(t| X)\) are formally defined as population-level quantities.

This interpretational shift is closely tied to unobserved heterogeneity.
Individuals who share the same observed covariates may differ substantially in their underlying risk processes.
Therefore, population-level hazards represent averages over heterogeneous individuals rather than intrinsic individual properties.
Unless heterogeneity vanishes, a condition rarely met in biomedical settings, population hazards cannot be directly equated with individual risks.
However, this distinction is often blurred in practice, complicating interpretation, prediction, and causal reasoning.

Classical regression-based survival models adopt specific functional forms for population-level quantities.
For example, Cox's proportional hazards model \cite{cox1972regression} and accelerated failure time (AFT) \cite{Buckley1979AFT} formulations describe structured relationships between covariates and hazards or survival times.
These models are highly effective descriptive tools, but they do not explicitly specify the individual-level processes that would give rise to the assumed population structures.
As a result, the connection between model assumptions and underlying individual risk dynamics remains implicit.

Recent machine learning approaches to survival analysis intensify this issue.
By learning flexible mappings from covariates to time-dependent hazards or survival probabilities, such models can further blur the distinction between population-level quantities and individual-level interpretations, even though these quantities remain formally defined at the population level.
Without a generative framework that distinguishes individual risk mechanisms from their population aggregates, the interpretation of model outputs remains ambiguous.
Taken together, these considerations highlight the absence of an explicit mechanism-based foundation that links individual-level risk processes to observed population-level survival patterns.

\subsection{Existing Perspectives on Survival Modeling}
The distinction between individual-level and population-level notions of hazard has long been implicit in the survival analysis literature.
In Cox’s original formulation \citep{cox1972regression}, the hazard function is introduced as a regression object describing cohort-level failure rates, with the baseline hazard treated as an arbitrary population function of time.
The model itself does not attribute ontological status to individual hazards, nor does it attempt to specify individual risk processes.
However, the absence of an explicit generative interpretation has left room for divergent readings in subsequent methodological and applied work.

A number of authors have emphasized interpretational subtleties associated with hazard-based quantities.
Hazard contrasts compare failure rates among individuals who have survived up to a given time rather than individual risks \citep{aalen2008survival, martinussen2013collapsibility, Grambsch2017}, and unobserved heterogeneity can distort hazard-based interpretations even in randomized studies \citep{dumas2025hazard}.
These contributions provide important clarification regarding the limitations of hazard ratios and related summaries, but they do not offer a general framework for linking population hazards to underlying individual risk processes.
Parallel developments have sought to address heterogeneity through extensions such as frailty models \cite{vaupel1979impact, hougaard1995frailty}, accelerated failure time formulations, and latent class survival analysis \cite{lin2002latent, proust2014joint}.
Each of these approaches introduces additional structure to account for variability beyond observed covariates, whether through random effects, discrete subpopulations, or time-scaling mechanisms.
More recently, machine learning based survival models have emphasized predictive performance by learning flexible mappings from covariates to observable risk or survival quantities \citep{Ishwaran2008RSF, Katzman2018, lee2018deephit}; see \citet{Wang2021Survey, wiegrebe2024deep} for reviews.
While these methods differ substantially in form and intent, they share a common feature: the underlying relationship between individual-level risk mechanisms and population-level survival quantities remains implicit, particularly in predictive settings where individual risk estimates are difficult to validate or calibrate \citep{Steyerberg2010assessing, Austin2020graphical}.

What is currently missing is a unifying perspective that makes this relationship explicit.
In particular, there is a need for a framework that (i) defines individual risk processes as latent mechanisms, (ii) explains how population-level hazards arise as aggregates of these mechanisms, and (iii) clarifies the information-theoretic limitations that constrain what can be inferred from survival data.
The present work is motivated by this gap.

\subsection{A Mechanism-Based Framework for Survival Analysis}
To address the interpretational and identifiability challenges outlined above, this paper proposes a mechanism-based analytical framework that explicitly characterizes individual-level risk-generating mechanisms and clarifies how population-level survival quantities arise as aggregated consequences of these latent mechanisms, thereby laying a foundation for a unified understanding of classical survival models.

Survival analysis is fundamentally concerned with the relationship between individual characteristics and time-to-event outcomes.
Implicit in many modeling approaches, and in their common interpretations, is the notion that each individual is associated with a specific, though unobserved, hazard trajectory over time.
We make this idea explicit by introducing the notion of an \emph{individual hazard mechanism}, denoted by $\Theta$, which characterizes the stochastic process governing an individual’s event time.
Given a realization of $\Theta$, an individual hazard function and survival function are well defined, but the mechanism itself is not directly observable.
Crucially, survival data provide extremely limited information about individual mechanisms.
Each individual typically contributes at most a single event-time observation, possibly subject to censoring.
As a consequence, even in the presence of rich covariate information, individual hazard mechanisms are fundamentally not identifiable from survival data alone, even under idealized sampling and arbitrarily large sample sizes.
This limitation is not a technical artifact of specific modeling choices, but a fundamental consequence of the information structure inherent in survival data.
Observed survival quantities, therefore, admit a natural population-level interpretation.
Conditional on covariates $X=x$, the survival function $S(t | X=x)$ arises as an aggregate over the distribution of individual mechanisms compatible with these covariates.
Rather than representing the survival trajectory of a representative individual, population-level survival reflects a mixture of heterogeneous hazard mechanisms.
This distinction is essential for interpreting model outputs and for understanding the scope of valid inference in survival analysis.
From this perspective, the central object linking individual-level heterogeneity and population-level survival is the conditional mechanism distribution $\mathbb{P}(\Theta | X=x)$.
This distribution encodes how much information observed covariates provide about individual hazard mechanisms.
However, because survival data do not resolve individual mechanisms, this conditional distribution is typically not identifiable beyond coarse, model-dependent summaries.
Any attempt to model survival data must therefore confront the fact that $\mathbb{P}(\Theta|X=x)$ cannot be freely estimated from the data alone.

Survival models can thus be understood as imposing structural assumptions on the conditional mechanism distribution.
Different modeling traditions correspond to different ways of constraining residual heterogeneity: by suppressing it, by restricting it to low-dimensional forms, or by representing it through discrete components.
These assumptions are often implicit, yet they fundamentally shape what aspects of survival behavior can be inferred from the data and how model outputs should be interpreted.

The aim of the present framework is not to recover individual hazard mechanisms but to provide a coherent language for articulating heterogeneity, information limitations, and modeling assumptions in survival analysis.
By making the role of latent mechanisms explicit, the framework clarifies what can and cannot be identified from survival data and sets the stage for a systematic examination of identifiability, which we develop in the following sections.

\subsection{Contributions}
This paper makes a conceptual and structural contribution to the understanding of survival analysis.
Rather than proposing a new estimation procedure or predictive model, our focus is on clarifying the objects of inference, the role of latent heterogeneity, and the information-theoretic limitations that shape survival modeling.
By reframing survival analysis through the lens of individual hazard mechanisms, we aim to provide a coherent perspective that unifies and contextualizes a broad range of existing approaches.

Specifically, the contributions of this work are as follows.
First, we introduce a mechanism-based formulation that explicitly distinguishes between individual-level hazard mechanisms and population-level survival quantities, thereby resolving long-standing ambiguities in interpretation.
Second, within this framework, we characterize the conditional mechanism distribution as a central but generally non-identifiable object in survival analysis, and show that this non-identifiability arises from the intrinsic information structure of survival data rather than from methodological or modeling deficiencies.
Building on these results, we reinterpret several classical survival models as imposing distinct structural constraints on the conditional mechanism distribution, providing a unified explanation for their assumptions, representational scope, and inherent limitations.
Finally, as a unifying consequence, the proposed framework offers a principled language for comparing modeling strategies and for articulating directions for future methodological development in survival analysis.
By making the latent information structure explicit, the proposed framework not only clarifies the limits of individual-level inference, but also delineates a space of modeling choices in which methodological assumptions can be stated, compared, and evaluated in a principled manner.

\subsection*{Structure of the Paper}
The remainder of the paper is organized as follows.
Section~\ref{sec:framework} introduces the latent hazard mechanism framework and formalizes the relationship between individual mechanisms, observable covariates, and survival outcomes.
Section~\ref{sec:identifiability} examines the identifiability implications of this formulation, establishing general limitations imposed by the information structure of survival data.
In Section~\ref{sec:classical}, we revisit several classical survival models, including proportional hazards models, frailty models, accelerated failure time models, and survival clustering approaches, through the proposed framework, highlighting how each handles residual heterogeneity.
Section~\ref{sec:discussion} concludes with a discussion of implications, limitations, and directions for future research.

\section{Latent Hazard Framework}
\label{sec:framework}

A persistent source of ambiguity in survival analysis concerns the relationship between individual-level risk and population-level observable quantities.
While hazard and survival functions are routinely interpreted as characterizing individual risk, they are, in practice, defined and estimated at an aggregated level, conditional on limited information.
To analyze the identifiability and interpretation of such quantities, it is therefore necessary to make explicit the information structure linking individual risk-generating mechanisms, observable covariates, and survival outcomes.

To this end, we introduce a minimal latent hazard formulation that explicitly separates individual-level risk mechanisms from observable population-level quantities.
At the core of this formulation is a latent hazard mechanism $\Theta$, which represents individual-level risk-generating structure and is treated as a primitive object that deterministically induces an individual hazard and survival trajectory.
Observable covariates $X$ are interpreted as partial information about this latent mechanism, giving rise to a conditional distribution over $\Theta$, rather than directly modifying individual risk once the mechanism is fixed.

Within this perspective, classical population-level and group-level survival quantities arise as observable summaries obtained by aggregating over latent heterogeneity in $\Theta$.
In particular, survival functions correspond to mixture averages of individual survival trajectories, while observable hazards emerge as survivor-weighted averages of individual hazards.
The resulting formulation is fully consistent with the standard formalism of survival analysis, but makes explicit the distinction between individual-level mechanisms and group-level observable quantities.
This separation provides the conceptual foundation for the identifiability analysis developed in the subsequent sections.

\subsection{Individual Hazard Mechanism}
\label{sec:individual_hazard_mechanism}

We formalize the notion of an individual hazard mechanism, which serves as a conceptual representation of individual-level heterogeneity in survival outcomes.

\begin{definition}[Individual hazard mechanism]
    Let \((\Omega,\mathcal{F}, \mathbb{P})\) be an underlying probability space. 
    An individual hazard mechanism is a random element
    \[
        \Theta : (\Omega,\mathcal{F}) \to (\mathcal{M},\mathcal{B}),
    \]
    taking values in a measurable space \((\mathcal{M},\mathcal{B})\), together with a measurable mapping
    \[
        \mathcal{H} : \mathcal{M} \to \{ h : [0,\infty) \to [0,\infty) \},
    \]
    such that each realization \(\theta \in \mathcal{M}\) deterministically induces an individual hazard trajectory
    \(h_\theta(t) = \mathcal{H}(\theta)(t),\)
    \(t \ge 0.\)
    Thus, the hazard function of an individual is not itself random beyond the randomness of $\Theta$; all heterogeneity in individual time-to-event behavior is represented, at an abstract level, through the distribution of $\Theta$.
\end{definition}

\begin{remark}[Mechanism space versus function space]
    Although each realization of an individual hazard mechanism ultimately induces a nonnegative hazard function through the mapping $\mathcal H$, we do not define $\Theta$ directly as a random function.  
    Instead, $\Theta$ is treated as an abstract random element taking values in a general mechanism space $\mathcal M$, whose realizations deterministically generate hazard trajectories.  
    This distinction separates the representation of individual risk-generating structure from the functional form of the resulting hazard and allows additional structure to be incorporated at the mechanism level.
\end{remark}

The mechanism $\Theta$ is intentionally defined at an abstract level, and the framework does not require a unique parametrization of individual risk.  
Its purpose is to formalise heterogeneity at the level of hazard-generating structure itself, rather than as stochastic variation added to an otherwise homogeneous event-time distribution.  
Under this interpretation, individuals differ because their underlying mechanisms differ, and these differences induce distinct hazard trajectories.
As a purely illustrative example, one convenient way to instantiate such mechanisms is through a hierarchical parametric representation.
In this formulation,
\[
    \Theta = (\mathcal C, \Phi),
\]
where $\mathcal C$ denotes a hazard class determining the qualitative shape of the hazard function, such as monotone, unimodal, or bathtub-shaped behaviour, and $\Phi$ collects class-specific parameters specifying a particular trajectory within that class.  
The induced hazard function is then given by $h_\Theta(t) = h_{\mathcal C}(t;\Phi)$.  
This representation aligns with common taxonomic descriptions of hazard shapes and provides a concrete example of how a mechanism may encode both coarse structural patterns and within-class variability.
Importantly, the general framework does not rely on this or any other specific parametrisation.  
The admissible hazard space may be chosen according to the scientific context and modelling objectives, ranging from classical parametric families to geometric classes, mixtures, or more abstract process-based representations.  
Accordingly, the latent mechanism $\Theta$ should be viewed as a general probabilistic schema, while concrete models are obtained by specifying an appropriate hazard space for the application at hand.

\paragraph{From hazard mechanisms to individual survival.}
An individual hazard trajectory $h_\Theta(t)$ is interpreted as a rate of risk accumulation over time.
Accordingly, it induces a cumulative hazard
\[
    H_\Theta(t)=\int_0^t h_\Theta(u)\,du,
\]
which summarizes the total amount of risk accrued up to time $t$.
To connect this risk accumulation with event-time outcomes, we introduce an event-time random variable $T:\Omega\to[0,\infty]$ defined on the same probability space.
We assume that, conditional on the realization of the hazard mechanism, the distribution of $T$ is fully determined by the induced hazard trajectory.
Specifically, if $h_\theta$ is locally integrable, the associated conditional survival function is given by
\[
    S_\Theta(t)=\exp\!\left(-H_\Theta(t)\right).
\]
For a fixed realization $\Theta=\theta$, this survival curve coincides with the conditional survival law
\[
    S_\theta(t)=\mathbb P(T>t| \Theta=\theta).
\]
This is a well-defined probability statement at the level of the underlying probability space.
However, it should not be interpreted as a time-evolving probability assessment for a fixed realized individual.
Rather, $S_\theta(t)$ characterizes the conditional distribution of event times across individuals sharing the same hazard-generating mechanism, and serves as a deterministic functional summarizing how risk accumulates along the trajectory specified by $\theta$.

\begin{remark}[Mechanism stability]
    Throughout this paper, the individual hazard mechanism $\Theta$ is assumed to be stable over the observation window.
    Time enters the model through the evaluation of the hazard trajectory $h_\Theta(t)$ and through the accumulation of risk, rather than through structural changes in the mechanism itself.
    Under this stability assumption, any temporal variation in individual risk is expressed through the form of the hazard trajectory $h_\Theta(t)$, rather than through the survival function.
    This assumption is implicit in classical survival models and is adopted here as a baseline for theoretical clarification, rather than as a claim about biological reality.    
\end{remark}

\subsection{Group-level survival and observable hazard}
\label{sec:group-level}

In this subsection, we clarify how classical population-level and group-level survival quantities arise as observable summaries of latent individual risk mechanisms.
The purpose of this section is not to introduce new modeling assumptions, but to make explicit the informational structure through which latent heterogeneity is aggregated into observable survival and hazard functions.

\subsubsection{Observable information and conditional mechanism distributions}

We begin by formalizing the role of observable covariates as carriers of partial information.

\begin{definition}[Observable information]
    Let $(\Omega,\mathcal F,\mathbb P)$ be the underlying probability space introduced in Section~\ref{sec:individual_hazard_mechanism}.
    An observable information is a random element
    \[
    X : (\Omega,\mathcal F)\to(\mathcal X,\mathcal A)
    \]
    defined on the same probability space.
    The observable information $X$ is interpreted as partial information about the latent mechanism $\Theta$.
    Formally, $X$ induces a regular conditional distribution
    \[
    \mathbb P_\Theta(\cdot | X)
    \]
    on $(\mathcal M,\mathcal B)$, representing the residual heterogeneity in hazard mechanisms that remains after conditioning on the available observations.
\end{definition}

This definition is deliberately informational rather than generative.
No functional, causal, or structural relationship between $X$ and $\Theta$ is imposed.
Conditioning on $X$ updates the distribution of latent mechanisms but does not alter the mechanism itself.
It is also worth noting that conditioning on survival up to time $t$ provides additional observable information and induces a time-dependent conditional distribution $\mathbb P(\Theta | T \ge t, X)$.
Consequently, all risk-generating content is attributed to the mechanism, a principle formalized below as mechanism sufficiency.

\paragraph{Mechanism sufficiency.}

Within the proposed formulation, the latent mechanism $\Theta$ is interpreted as encoding the complete individual-level risk-generating structure.
Once a realization $\Theta=\theta$ is given, the entire hazard trajectory $h_\theta(t)$ is fixed, and the conditional distribution of the event time $T$ is fully determined.
Under this interpretation, observable information $X$ plays a purely informational role: it provides partial knowledge about the latent mechanism $\Theta$ but does not modify the event-time distribution once the mechanism is fixed.
This informational separation implies the conditional independence relation
\[
    T \perp X | \Theta .
\]
This relation is not introduced as an additional modeling assumption, but follows directly from the definition of $\Theta$ as a hazard-generating mechanism.
If $\Theta$ indeed captures all individual-level structure governing risk, then no observable covariate should exert further influence on $T$ beyond what is already encoded in $\Theta$.
This does not preclude covariates from being highly predictive of risk, but locates their role entirely in refining information about the underlying mechanism.

Importantly, mechanism sufficiency does not require the observable information $X$ to uniquely determine $\Theta$.
In most practical settings, $X$ is only partially informative, so that substantial heterogeneity in $\Theta$ typically remains within covariate-defined groups.
This residual uncertainty is precisely what gives rise to group-level survival and hazard functions as mixtures over latent mechanisms, rather than as individual-level risk trajectories.
From a technical perspective, mechanism sufficiency also plays a supporting role in the theoretical developments that follow.
By treating $\Theta$ as the complete hazard-generating object, this principle ensures that all stochastic variation in the event time $T$ is mediated through $\Theta$, which allows identifiability questions to be formulated entirely at the level of conditional mechanism distributions.
In particular, this separation is used in the proof of Theorem~2.3 to characterize the aggregation operator mapping individual mechanisms to population-level survival functions.

\paragraph{Information versus causation.}
Throughout this paper, covariates are treated as sources of information about latent individual hazard mechanisms rather than as causal drivers of those mechanisms.
This distinction is deliberate.
Conditioning on observed covariates $X$ reflects an operation of statistical conditioning rather than a statement about how interventions on $X$ would alter the underlying hazard mechanism.
Accordingly, the conditional distribution $\mathbb{P}(\Theta | X=x)$ encodes partial information about individual mechanisms, not a causal pathway from $X$ to $\Theta$.

This perspective is compatible with, but logically distinct from, causal interpretations of survival models.
Causal questions concern hypothetical interventions, such as whether manipulating a treatment or exposure would change the underlying hazard mechanism.
By contrast, the present framework focuses on the information structure induced by observation: what can be inferred about individual risk mechanisms from survival data and covariates as they are observed.
These perspectives coincide only under strong assumptions linking covariates to mechanisms through deterministic or near-deterministic relationships.
Making this distinction explicit clarifies the interpretation of observable hazards and survival functions.
Even when covariates are causally relevant, fitted hazards conditioned on $X$ represent survivor-weighted averages over heterogeneous mechanisms rather than intrinsic individual-level risks.
The latent hazard framework therefore does not deny the relevance of causal reasoning in survival analysis, but emphasizes that causal questions and identifiability questions operate at different analytical levels and should be kept conceptually distinct.

\subsubsection{Population and group-level survival}

We now show how classical survival and hazard functions arise as aggregated quantities under partial information.

\begin{theorem}[Representation of observable survival and hazard]
\label{thm:representation}
    Let $\Theta$ be an individual hazard mechanism with induced hazard trajectory
    $h_\Theta(t)$ and associated survival curve $S_\Theta(t)$.
    Let $X$ be an observable covariate interpreted as partial information about $\Theta$.
    Then the following statements hold.
    \begin{enumerate}[label=(\roman*)]  
        \item The population-level survival function
        \(
        S(t) = \mathbb P(T>t)
        \)
        admits the representation
        \[
        S(t) = \mathbb E\!\left[S_\Theta(t)\right],
        \]
        that is, it arises as the average of individual survival curves over the latent
        mechanism distribution.
        
        \item For any covariate value $x$ with $\mathbb P(X=x)>0$, the group-level survival function
        satisfies
        \[
        S(t | x)
        =
        \mathbb P(T>t | X=x)
        =
        \mathbb E\!\left[S_\Theta(t) | X=x\right].
        \]
        
        \item If $S(t | x)$ is differentiable, the associated group-level (observable) hazard is given by
        \[
        h_{\mathrm{obs}}(t | x)
        =
        -\frac{d}{dt}\log S(t | x)
        =
        \mathbb E\!\left[h_\Theta(t) | T \ge t,\, X=x\right].
        \]
    \end{enumerate}
\end{theorem}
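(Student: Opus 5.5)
The plan is to derive all three statements from the law of total probability (tower property of conditional expectation), using the defining identity $S_\theta(t)=\mathbb P(T>t\mid\Theta=\theta)$ and mechanism sufficiency $T\perp X\mid\Theta$. Throughout I assume $h_\theta$ is locally integrable, so that $S_\theta(t)=\exp(-H_\theta(t))$ is a version of the regular conditional survival law. I also assume that $\theta\mapsto S_\theta(t)$ is measurable for each $t$; this follows from measurability of $\mathcal H$ together with the fact that integration over $[0,t]$ is a measurable functional.

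For (i), I write $\mathbb P(T>t)=\mathbb E[\mathbb P(T>t\mid\Theta)]=\mathbb E[S_\Theta(t)]$. This is the tower property applied to the bounded indicator $\mathbf 1\{T>t\}$.

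For (ii), the condition $\mathbb P(X=x)>0$ lets me use elementary conditioning, $\mathbb P(\cdot\mid X=x)=\mathbb P(\cdot\cap\{X=x\})/\mathbb P(X=x)$, so no disintegration issues arise. Conditioning first on $(\Theta,X)$ gives $\mathbb P(T>t\mid X=x)=\mathbb E[\mathbb P(T>t\mid\Theta,X)\mid X=x]$. Mechanism sufficiency then reduces the inner term to $\mathbb P(T>t\mid\Theta)=S_\Theta(t)$, and (ii) follows.

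For (iii), I will prove the identity in the form
\[
\mathbb E[h_\Theta(t)\mid T\ge t,X=x]=\frac{\mathbb E[h_\Theta(t)S_\Theta(t^-)\mid X=x]}{\mathbb P(T\ge t\mid X=x)}.
\]
Since $S_\theta$ is continuous, $S_\Theta(t^-)=S_\Theta(t)$, and the denominator equals $S(t\mid x)$. Establishing this form uses Bayes' rule together with $\mathbb P(T\ge t\mid\Theta,X)=S_\Theta(t)$. I then differentiate (ii):
\[
-\frac{d}{dt}S(t\mid x)=-\frac{d}{dt}\mathbb E[S_\Theta(t)\mid X=x]=\mathbb E[h_\Theta(t)S_\Theta(t)\mid X=x],
\]
where the last step uses $-\frac{d}{dt}S_\theta(t)=h_\theta(t)S_\theta(t)$ at Lebesgue points of $h_\theta$. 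Dividing by $S(t\mid x)$ gives $-\frac{d}{dt}\log S(t\mid x)$, and the identity follows wherever $S(t\mid x)>0$.

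The main obstacle is justifying the interchange of $d/dt$ with the conditional expectation. Hypothesis (iii) only gives differentiability of $S(t\mid x)$, which is not enough on its own. I would avoid differentiating under the integral by working with integrated forms. Fubini–Tonelli, applicable because the integrand is nonnegative, gives
\[
S(t\mid x)=1-\int_0^t \mathbb E[h_\Theta(u)S_\Theta(u)\mid X=x]\,du .
\]
This shows $S(\cdot\mid x)$ is absolutely continuous with a.e.\ derivative $-\mathbb E[h_\Theta S_\Theta\mid X=x]$. The claimed identity therefore holds for almost every $t$, and at every $t$ where the integrand is continuous, for instance when the hazards $h_\theta$ are continuous and locally uniformly integrable in $\theta$. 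I would state the result with this almost-everywhere qualification, or add a mild regularity assumption (continuity of $t\mapsto h_\theta(t)$ plus domination on compacts) that makes it hold pointwise.
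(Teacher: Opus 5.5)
Your proof follows the paper's route for (i) and (ii): the tower property applied to $\mathbf 1\{T>t\}$, with mechanism sufficiency $T\perp X\mid\Theta$ used to drop $X$ from the inner conditional probability. For (iii) the structure also matches. You differentiate (ii), divide by $S(t\mid x)$, and read the ratio as an expectation under the reweighted law $\mathbb P(\Theta\in d\theta\mid T\ge t,X=x)\propto S_\theta(t)\,\mathbb P_\Theta(d\theta\mid X=x)$. The difference is how you justify the derivative step.

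The paper simply posits that differentiation and conditional expectation may be interchanged. It also uses $\frac{d}{dt}S_\theta(t)=-h_\theta(t)S_\theta(t)$ for every $\theta$ and every $t$. You instead integrate, $1-S_\theta(t)=\int_0^t h_\theta(u)S_\theta(u)\,du$, and apply Tonelli to obtain $S(t\mid x)=1-\int_0^t \mathbb E[h_\Theta(u)S_\Theta(u)\mid X=x]\,du$. This gains something real: it proves $S(\cdot\mid x)$ is absolutely continuous with no extra hypothesis, and it gives the hazard identity for almost every $t$. The pointwise version then needs exactly the continuity or domination condition that the paper's interchange assumption hides.

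You are right that differentiability of $S(t\mid x)$ alone cannot give a pointwise identity. Take a degenerate mechanism with $h_\theta\equiv 1$ except $h_\theta(1)=5$. Then $S(t\mid x)=e^{-t}$ is everywhere differentiable, yet $-\frac{d}{dt}\log S(t\mid x)$ at $t=1$ equals $1$, while $\mathbb E[h_\Theta(1)\mid T\ge 1,X=x]=5$. The paper's pointwise use of $\frac{d}{dt}S_\theta=-h_\theta S_\theta$ also fails at such points, so your almost-everywhere qualification is the correct reading of the statement.

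One small technical point: your Tonelli step needs joint measurability of $(\theta,u)\mapsto h_\theta(u)$. That is slightly more than the measurability of $\theta\mapsto S_\theta(t)$ for each fixed $t$ that you mention. The paper leaves this implicit too, so it is best stated as a standing assumption.
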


The proof of Theorem~\ref{thm:representation} is provided in Appendix~\ref{proof:thm_representation}.
The theorem makes explicit that survival and hazard functions are not primitive individual-level objects.
Population-level and group-level survival functions arise as mixture averages over latent mechanisms, while observable hazards correspond to survivor-weighted posterior averages of individual hazard trajectories.
All individual-level heterogeneity is therefore filtered through the observable survival function, clarifying both the scope and the intrinsic limitations of hazard-based inference.

\subsection{Mechanism-Level Averages and Their Interpretation}

For conceptual clarity, it is useful to distinguish the observable hazard $h_{\mathrm{obs}}(t | x)$ from other summaries that may be defined at the mechanism level.
In particular, one may consider the conditional mechanism-level average
\begin{equation}\label{eq:mechanism-average}
    \bar h(t | x)
    =
    \mathbb E\!\left[h_\Theta(t) | X=x\right]
    =
    \int_{\mathcal M} h_\theta(t)\,\mathbb P_\Theta(d\theta | X=x),
\end{equation}
which averages individual hazard trajectories with respect to the \emph{conditional} distribution of latent mechanisms compatible with $x$.
While $\bar h(t | x)$ is mathematically well defined, it is not directly observable from survival data.
In general,
\[
    h_{\mathrm{obs}}(t | x) \neq \bar h(t | x),
\]
because the observable hazard involves additional conditioning on survival up to time $t$, as established in the previous subsection.
The discrepancy between these two quantities reflects the combined effects of unobserved heterogeneity in individual mechanisms and dynamic selection within the risk set.
Only in the degenerate case where the conditional distribution $\mathbb P_\Theta(\cdot | X=x)$ collapses to a point mass, that is, when the covariates uniquely determine the underlying mechanism, do the two hazards coincide.

This distinction is central to the interpretation of survival models.
Classical parametric, semiparametric, and machine-learning approaches estimate or approximate the observable hazard $h_{\mathrm{obs}}(t | x)$.
They do not, and cannot without additional structural assumptions, recover individual hazard trajectories or their mechanism-level averages.
Accordingly, interpretations that treat fitted hazards as individual-level risk functions implicitly impose strong and typically unacknowledged restrictions on the latent mechanism distribution.
Within the present framework, the observable hazard should therefore be viewed as the result of an aggregation operator acting on latent mechanisms,
\[
    \{h_\theta(\cdot),\, \mathbb P_\Theta(\cdot | X=x)\}
    \;\longmapsto\;
    h_{\mathrm{obs}}(\cdot | x),
\]
which integrates both structural heterogeneity across mechanisms and time-dependent survivor selection.
This perspective clarifies why individual-level hazard quantities and their mechanism-level averages are, in general, not identifiable from survival data, and motivates the identifiability analysis developed in the subsequent section.

\section{Identifiability Theory}
\label{sec:identifiability}

Building on the framework introduced in Section~\ref{sec:framework}, we examine the identifiability implications of modeling survival data through latent individual-level mechanisms.
The central object of interest is the individual hazard mechanism $\Theta$.
Identifiability in this setting, however, is not a property of $\Theta$ alone, but of the information structure linking $\Theta$, observable covariates $X$, and the event-time process.
As a consequence, what can be learned from survival data is fundamentally constrained by the way individual-level heterogeneity is aggregated into observable quantities.
These limitations do not hinge on censoring.
Even in the idealized setting where all event times are fully observed, the aggregation of heterogeneous individual mechanisms into population-level quantities already entails irreducible constraints on identifiability.

Accordingly, this section focuses on identifying which aspects of the latent mechanism distribution are, in principle, recoverable from survival data, and which limitations are unavoidable.
Rather than attempting to reconstruct individual survival curves, we clarify the scope of identifiability at the population level under partial information.
Specifically, we first analyze how the information structure restricts identifiability.
We then introduce a non-degeneracy principle under which these restrictions can be formalized.
Finally, we show how the resulting non-identifiability implies that the conditional distribution of mechanisms given covariates should be interpreted as an irreducible residual structure in survival analysis.

\subsection{Information Structure and the Nature of Identifiability}

Identifiability is fundamentally a property of the information structure rather than of any particular statistical model.
In statistical theory, what can be identified from data is determined by the mapping between latent quantities and observable variables, and by whether this mapping preserves sufficient information to distinguish distinct underlying states (see, e.g., \citep{Kagan1973,LehmannCasella1998}).
Questions of identifiability therefore precede modeling choices, estimation procedures, and assumptions on functional forms.
In survival analysis, this information structure is shaped not only by observed covariates, but also by censoring and by the aggregation of individual risk over time.

In applied survival analysis, the underlying object of interest is often taken to be an individual-level survival or hazard function, implicitly associated with an unobserved individual mechanism, particularly in settings where the goal is individual-level risk prediction.
However, survival data do not provide direct access to such mechanisms. 
Instead, information about individual survival behavior is mediated through a limited set of observables, typically consisting of the event time, censoring indicator, and covariates.
This mediation induces an intrinsic and irreversible loss of information, in the sense that multiple distinct individual mechanisms may give rise to the same observable distribution.

As a consequence, identifiability issues in survival analysis cannot be resolved solely by adopting more flexible models or richer parameterizations.
Even under arbitrarily flexible specifications, the observable data may be insufficient to distinguish between heterogeneous individual survival patterns.
The resulting limitations are therefore structural in nature and arise from the aggregation of individual-level behavior at the population level, rather than from model misspecification or insufficient expressiveness.

This perspective suggests that identifiability questions in survival analysis should be addressed at the level of information structure.
Rather than asking whether a particular model or parameterization is identifiable, a more fundamental question is which aspects of the underlying survival mechanisms are, in principle, recoverable from the available observables, and which are inherently lost.
In particular, it is essential to distinguish between quantities that are identifiable only at the population or group level and those that pertain to individual hazard mechanisms.
The remainder of this section builds on this viewpoint to characterize the scope and limits of identifiability under partial information.

\subsection{Non-degeneracy Assumptions and Identifiability Results}

Building on the framework introduced in Section~\ref{sec:framework}, we view survival data as arising from latent individual-level mechanisms that govern hazard trajectories and induce individual survival curves. 
Within this framework, the observable covariates $X$ are interpreted as partial information about the underlying mechanism $\Theta$, rather than as deterministic predictors of individual survival behavior. 
Consequently, the relationship between $\Theta$ and $X$ plays a central role in determining what aspects of individual heterogeneity can be learned from data. 
Therefore, a necessary conceptual prerequisite for interpreting survival quantities as reflecting individual-level risk is that the conditional distribution of mechanisms given covariates, $\mathbb{P}(\Theta | X)$, should not collapse to a degenerate point mass.
This non-degeneracy principle reflects the fact that covariates typically provide incomplete information about individual survival mechanisms: individuals with identical or similar covariate values may still exhibit substantial heterogeneity in their underlying hazard dynamics. 
In particular, we exclude settings in which the individual mechanism is almost surely a deterministic function of the covariates, i.e., $\Theta = g(X)$ almost surely.

This principle is especially natural in contexts where the objective is individual-level risk prediction. 
While covariates may be informative about survival outcomes, it is rarely plausible that they fully determine an individual hazard mechanism. 
Treating $X$ as partial rather than complete information about $\Theta$, therefore,  provides a more realistic representation of individual variability and aligns with the information structure implicit in survival data.
Importantly, the non-degeneracy principle should be understood as a structural modeling stance rather than a technical assumption imposed for mathematical convenience. 
It formalizes the idea that irreducible uncertainty remains at the individual level even after conditioning on observed covariates. 

To make the above non-degeneracy principle concrete, we introduce a technical formulation that is sufficient for deriving identifiability results. 
This formulation is not intended to be exhaustive or necessary; rather, it provides a transparent and tractable realization of the principle that covariates convey only partial information about individual mechanisms

\begin{assumption}[Conditional Local Richness]
\label{ass:local_richness}
    Fix a covariate value $x$ and consider the space of latent hazard mechanisms that may arise under the condition $X=x$. 
    We assume that, conditional on $X=x$, the latent mechanism $\Theta$ remains non-degenerate in the sense that non-trivial individual-level variability persists at the level of survival trajectories.
    Specifically, we assume that there exist a reference mechanism $\theta_0 \in \mathcal M$, a bounded measurable function $g:[0,\infty)\to\mathbb R$ that is not identically zero, a constant $\delta>0$, and a measurable mapping
    \[
        \varepsilon \longmapsto \theta(\varepsilon)\in\mathcal M,
        \qquad \varepsilon\in(-\delta,\delta),
    \]
    such that for all $\varepsilon\in(-\delta,\delta)$, the associated individual survival functions satisfy
    \[
        S_{\theta(\varepsilon)}(t) = S_{\theta_0}(t)\bigl(1+\varepsilon g(t)\bigr), \qquad \forall\, t\ge 0,
    \]
    and define valid survival functions for each $\varepsilon$.
\end{assumption}

This assumption formalizes a local richness property of the mechanism space: even after conditioning on the covariate value $X=x$, the space of admissible survival trajectories admits non-trivial perturbations. 
Consequently, the conditional distribution $\mathbb{P}(\Theta | X=x)$ does not collapse to a degenerate point mass.
It is important to emphasize that this assumption should be understood as one sufficient technical realization of the non-degeneracy principle introduced above, rather than as a fundamental requirement. 
We emphasize that the non-degeneracy principle itself does not assert that all non-degenerate mechanism spaces necessarily lead to non-identifiability.
Rather, the present assumption shows that once non-degeneracy admits even mild local variation in survival trajectories, aggregation inevitably destroys injectivity.
Alternative formulations that preserve the conditional variability of mechanisms given covariates would lead to analogous identifiability conclusions. 
The role of the present assumption is therefore to enable a precise characterization of identifiability consequences under partial information, rather than to impose a restrictive structural model.
In this sense, non-identifiability should be viewed as a consequence of allowing irreducible heterogeneity at the mechanism level, rather than as a failure of statistical modeling.

Under the conditional local richness assumption, the mapping from individual survival trajectories and their conditional distribution to the population-level survival function constitutes an aggregation operator that is inherently many-to-one.
The following theorem formalizes this observation by establishing a fundamental non-identifiability result at the individual level: the mapping from individual survival trajectories and their conditional distribution to the observable population-level survival function is inherently non-injective, and individual survival curves and hazard trajectories therefore cannot be uniquely recovered from population-level survival data, even when the mechanism space is fixed and the population survival function is fully known.

\begin{theorem}[Structural non-identifiability at the individual level]
\label{thm:nonid-individual}
    Fix a covariate value $x$ and let the individual hazard mechanism $\Theta$ take values in a measurable space $(\mathcal M,\mathcal B)$. 
    Suppose that the population-level survival function $S(t | x)$ is well defined on $[0,\infty)$. 
    Assume further that, conditional on $X=x$, the mechanism space $\mathcal M$ satisfies the conditional local richness assumption introduced above.
    Then the observation operator mapping individual-level survival trajectories to the population-level survival function,
    \[
        \mathcal O:\;
        \bigl(\{S_\theta(\cdot)\}_{\theta\in\mathcal M},\, \mathbb P_\Theta(\cdot | X=x)\bigr)
        \;\longmapsto\;
        S(t | x)
        \;=\;
        \int_{\mathcal M} S_\theta(t)\, \mathbb P_\Theta(d\theta | X=x),
    \]
    is non-injective at the level of survival paths.
    More precisely, even when the mechanism space $\mathcal M$ is fixed, there exist infinitely many distinct conditional distributions
    \[
        \mathbb P^{(k)}_\Theta(\cdot | X=x), \qquad k=1,2,\ldots,
    \]
    such that, for all $k$,
    \[
        S(t | x)
        =
        \int_{\mathcal M} S_\theta(t)\, \mathbb P^{(k)}_\Theta(d\theta | X=x),
        \qquad \forall\, t \ge 0,
    \]
    while the corresponding conditional distributions differ at the path level. Consequently, the induced collections of individual survival curves, and the associated individual hazard trajectories $h_\theta(t)$, are not uniquely identifiable from the observable population-level survival function $S(t | x)$.
\end{theorem}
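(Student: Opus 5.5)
The construction will use the perturbation family supplied by Assumption~\ref{ass:local_richness}. The key observation is that $S_{\theta(\varepsilon)}$ is affine in $\varepsilon$:
\[
S_{\theta(\varepsilon)}(t)=S_{\theta_0}(t)+\varepsilon\,S_{\theta_0}(t)g(t).
\]
Hence any probability measure on the curve $\{\theta(\varepsilon):|\varepsilon|<\delta\}$ with mean parameter zero aggregates to exactly $S_{\theta_0}$. For a fixed $\eta\in(0,\delta)$ we have
\[
\tfrac12 S_{\theta(\eta)}+\tfrac12 S_{\theta(-\eta)}=S_{\theta_0}.
\]
Distinct choices of $\eta$, for instance $\eta_k=\delta/(k+1)$, will give the infinitely many distinct conditional distributions.

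First we fix a reference conditional law $\mathbb P_\Theta(\cdot\mid X=x)$, which is the given one. If it charges $\theta_0$, we write
\[
\mathbb P=p\,\delta_{\theta_0}+(1-p)Q
\]
with $p=\mathbb P(\{\theta_0\}\mid X=x)$. We then set
\[
\mathbb P^{(k)}=p\bigl(\tfrac12\delta_{\theta(\eta_k)}+\tfrac12\delta_{\theta(-\eta_k)}\bigr)+(1-p)Q .
\]
By Theorem~\ref{thm:representation}(ii) and linearity of the integral,
\[
\int S_\theta(t)\,\mathbb P^{(k)}(d\theta)=S(t\mid x)\quad\text{for all } t.
\]

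If $p=0$, which is the generic case, the local richness is meant as a statement about the conditional mechanism space under $X=x$. I would then interpret the assumption so that the construction can be embedded directly. One option is to note that the theorem only asserts the existence of infinitely many distributions producing a common $S(t\mid x)$. It then suffices to take the baseline law to be $\delta_{\theta_0}$, so that $S(\cdot\mid x)=S_{\theta_0}$, and to exhibit $\mathbb P^{(k)}$ as above, together with the baseline $\delta_{\theta_0}$ itself. A more general alternative is to mix: replace the $\theta_0$ component by a symmetric mixture along the perturbation curve with a small weight $p>0$. In that version, one first perturbs the given law $\mathbb P$ to $(1-p)\mathbb P+p\,\delta_{\theta_0}$, which changes $S(\cdot\mid x)$, so the cleaner route is the first one. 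I will state explicitly which reading is used.

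Distinctness comes in two parts.
\begin{itemize}
\item \textbf{Path-level distinctness.} Since $g\not\equiv 0$ and $S_{\theta_0}>0$ on an interval, the curves $S_{\theta(\varepsilon)}$ are pairwise distinct for distinct $\varepsilon$. Therefore $\theta(\varepsilon)\neq\theta(\varepsilon')$, the supports $\{\theta(\pm\eta_k)\}$ differ, and the $\mathbb P^{(k)}$ are distinct, with distinct pushforwards onto path space under $\theta\mapsto S_\theta$.
\item \textbf{Hazard-level distinctness.} Where $h_\theta$ exists, it equals $-\frac{d}{dt}\log S_\theta$. Two different survival curves therefore induce different hazard trajectories on a set of positive measure. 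Hence neither the individual curves nor the hazards are determined by $S(\cdot\mid x)$.
\end{itemize}

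The main obstacle is the mismatch between the assumption and the target. Assumption~\ref{ass:local_richness} supplies mechanisms in $\mathcal M$, but does not say that $\theta_0$ carries mass under the actual conditional law. The argument must therefore either exhibit non-injectivity of $\mathcal O$ as a map, via the baseline $\delta_{\theta_0}$, or carefully justify a weight on $\theta_0$.

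I will present non-injectivity of $\mathcal O$ as the core claim. This requires only two distinct inputs with equal outputs, and the family $\{\mathbb P^{(k)}\}$ provides infinitely many. Measurability of $\varepsilon\mapsto\theta(\varepsilon)$ ensures that the mixtures are legitimate probability measures on $(\mathcal M,\mathcal B)$. Validity of each $S_{\theta(\varepsilon)}$ as a survival function is given by the assumption.
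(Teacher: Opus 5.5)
When the conditional law charges $\theta_0$, your argument is the paper's. The paper removes mass $\eta\le\mu_0(\{\theta_0\})$ from the atom at $\theta_0$ and replaces it by $\alpha\,\delta_{\theta(\varepsilon)}+(1-\alpha)\,\delta_{\theta(\varepsilon')}$ with $\varepsilon'=-\frac{\alpha}{1-\alpha}\varepsilon$. That is your mean-zero two-point mixture with general weights, and the paper likewise varies $\varepsilon$ to obtain the infinite family.

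The two arguments part ways when $\mathbb P_\Theta(\{\theta_0\}\mid X=x)=0$. Your fallback, with baseline $\delta_{\theta_0}$, proves the first sentence of the theorem: $\mathcal O$ is non-injective, since $S_{\theta_0}$ has infinitely many preimages. It does not produce infinitely many laws reproducing the \emph{given} $S(\cdot\mid x)$, which is what the ``more precisely'' clause asserts. Moreover, its input is the point mass that the surrounding discussion says the assumption excludes.

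The paper closes this case by asserting that an ``infinitesimal adjustment'' putting mass on $\theta_0$ does not affect $S(\cdot\mid x)$. Your objection to that step is correct. In fact the case cannot be closed in general, because the clause can fail under Assumption~\ref{ass:local_richness} as written. Take $\mathcal M=\{a,b\}\cup(-\tfrac12,\tfrac12)$ with $S_a(t)=e^{-2t}$, $S_b(t)=e^{-3t}$ and $S_\varepsilon(t)=e^{-t}\bigl(1+\varepsilon(1-e^{-t})\bigr)$. These are valid survival functions, and local richness holds with $\theta_0=0$ and $g(t)=1-e^{-t}$. Suppose the conditional law is the non-degenerate $\tfrac12\delta_a+\tfrac12\delta_b$. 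Any competing law $q_a\delta_a+q_b\delta_b+r\rho$, with $\rho$ on the curve, must reproduce $\tfrac12e^{-2t}+\tfrac12e^{-3t}$. Comparing coefficients of the linearly independent functions $e^{-t},e^{-2t},e^{-3t}$ gives $r(1+\bar\varepsilon)=0$, where $\bar\varepsilon=\int\varepsilon\,\rho(d\varepsilon)\in(-\tfrac12,\tfrac12)$. Hence $r=0$ and $q_a=q_b=\tfrac12$, so the given law is the unique preimage.

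So do not present non-injectivity at $S_{\theta_0}$ as a substitute, and do not frame the fix as a ``reading'' of the assumption. State an explicit extra hypothesis, such as $\mathbb P_\Theta(\{\theta_0\}\mid X=x)>0$. Under it your proof is complete and coincides with the paper's.

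A minor point on distinctness: what you need is $S_{\theta_0}(t_*)\,g(t_*)\neq 0$ at some $t_*$, and positivity of $S_{\theta_0}$ on some interval does not guarantee this. It does hold, because $S_{\theta_0}=\exp(-H_{\theta_0})>0$ everywhere when $h_{\theta_0}$ is locally integrable, as the framework assumes. The paper's proof uses this tacitly as well.
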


The proof of Theorem~\ref{thm:nonid-individual} is provided in Appendix~\ref{proof:thm_nonid}.
This theorem shows that the lack of identifiability at the individual level is a structural consequence of the information aggregation inherent in survival data. 
Even when the population-level survival function $S(t| x)$ is fully known and the mechanism space $\mathcal M$ is fixed, the mapping from individual survival trajectories and their conditional distribution to the observable population-level survival curve is not invertible. 
As a result, individual survival curves and hazard trajectories cannot be uniquely recovered from $S(t| x)$.

Importantly, this non-identifiability does not arise from model misspecification or insufficient flexibility, but from the fact that population-level survival functions represent averages over heterogeneous individual mechanisms. 
Different configurations of individual-level survival paths may therefore induce identical population-level behavior, rendering them observationally indistinguishable. 
From this perspective, the population survival function $S(t | x)$ should be understood as an aggregated object rather than as a direct proxy for any individual survival trajectory.
The only situation in which this aggregation-induced non-identifiability disappears is the degenerate case where conditioning on the observable covariates eliminates all residual heterogeneity in the latent mechanisms. 
If the conditional distribution $\mathbb{P}(\Theta | X=x)$ collapses to a point mass at a single mechanism, the aggregation operator becomes trivial and the population-level survival function coincides with the survival trajectory induced by that mechanism. 
Such perfect information scenarios represent knife-edge boundary cases in which individual survival curves are, in principle, identifiable. 
In realistic applications, however, observable covariates rarely exhaust individual-level heterogeneity, and population survival functions remain genuine aggregates over latent mechanisms.

This result clarifies the scope of what can and cannot be learned from survival data.
Although population-level quantities and summaries are identifiable under appropriate conditions, individual-level survival mechanisms remain fundamentally underdetermined without additional structural assumptions. 
In the next subsection, we build on this insight by interpreting the conditional distribution of mechanisms given covariates as a residual structure in survival analysis.

\subsection{Conditional Mechanism Distributions as Residual Structure in Survival Models}

A central feature of survival analysis is the presence of multiple sources of randomness operating at different levels of the data-generating process.
At a minimum, randomness arises from the event-time mechanism itself, reflecting the stochastic nature of survival outcomes even when an individual hazard mechanism is fixed.
This source of randomness is explicitly represented in virtually all survival modeling frameworks through hazard functions, survival functions, or counting process formulations.
In addition to event-time variability, survival data also reflect heterogeneity across individuals in their underlying hazard-generating mechanisms.
Even after conditioning on observed covariates, individuals may differ in their latent hazard trajectories in ways that are not directly observable.
Unlike event-time randomness, however, this mechanism-level heterogeneity is typically not represented as an explicit stochastic component of the model.
Instead, it is often absorbed into population-level quantities, constrained through low-dimensional parametric structures, or treated implicitly through modeling assumptions.
The latent hazard framework introduced in Section~\ref{sec:framework} makes this additional layer of randomness explicit by treating individual hazard mechanisms as latent random objects and by separating mechanism-level heterogeneity from event-time stochasticity.

From an information-theoretic perspective, this residual heterogeneity reflects uncertainty that cannot be eliminated by conditioning on $X$ alone.
The identifiability results in the previous subsection show that this mechanism-level uncertainty cannot, in general, be resolved from population-level survival data.
Distinct configurations of individual survival trajectories may induce identical observable survival functions, implying that some aspects of the underlying data-generating process are inherently unidentifiable.
This observation suggests that any survival model must, either explicitly or implicitly, take a stance on how such irreducible uncertainty is represented or absorbed.
\paragraph{Two complementary perspectives on non-identifiability.}
The non-identifiability of the conditional mechanism distribution $\mathbb{P}(\Theta | X=x)$ can be understood from two complementary but distinct perspectives. 
First, at the level of the data-generating process, this non-identifiability reflects an inability to recover the structural relationship between latent individual mechanisms and observable covariates from survival data alone.
In this sense, $\mathbb{P}(\Theta | X=x)$ plays a role analogous to the unknown structural form of the data-generating process in regression analysis: just as the joint distribution of $(Y,X)$ does not determine whether the true relationship is linear, additive, or nonlinear without additional assumptions, survival data do not uniquely determine the conditional distribution of latent hazard mechanisms.
Modeling choices such as proportional hazards, accelerated failure time, or frailty specifications therefore correspond to structural assumptions imposed at this level, rather than quantities identified from the data.
Second, from a model-based perspective, once a particular structural form has been adopted, the conditional mechanism distribution captures the residual heterogeneity that remains unexplained by the observed covariates within the chosen model class.
In this sense, $\mathbb{P}(\Theta | X=x)$ plays a role analogous to a residual component: it represents individual-level variability that cannot be attributed to $X$ under the imposed structural assumptions.
Importantly, this residual interpretation is conditional on the modeling framework and should not be conflated with the more fundamental non-identifiability that arises at the data-generating level.

Taken together, these two perspectives clarify the role of conditional mechanism distributions in survival analysis.
Their non-identifiability is not a modeling defect, nor a consequence of insufficient flexibility, but a structural feature of the information available in survival data.
Different survival models may parametrize, restrict, or approximate $\mathbb{P}(\Theta | X=x)$ in different ways, but none can eliminate its presence entirely unless individual survival mechanisms are fully determined by observed covariates.

\section{Classical Survival Models under a Latent Hazard Framework}
\label{sec:classical}

In the preceding sections, we have argued that individual survival outcomes arise from latent, individual-specific hazard mechanisms and that observable survival quantities, such as population-level or group-level survival and hazard functions, are obtained through aggregation over these unobserved mechanisms. 
A central consequence of this representation is that, in the presence of covariate information \(X\), the conditional distribution of individual mechanisms,
\[
    \mathbb{P}(\Theta | X = x),
\]
is, in general, not identifiable from observed survival data.
This observation has fundamental implications for survival analysis. 
It implies that any statistical model for survival data must, either explicitly or implicitly, take a stance on how this conditional mechanism distribution is handled.
Some models bypass it altogether by operating exclusively at the population or group level; others impose restrictive structural assumptions that render the problem tractable; still others attempt to approximate or discretize the mechanism space in order to capture heterogeneity or latent subtypes. 
These choices are rarely articulated in mechanism-level terms, yet they determine the class of individual heterogeneity that a model is capable of representing.

The purpose of this section is therefore not to review classical survival models in terms of their regression forms or estimation procedures, but to reinterpret them within a framework based on mechanism conditioning.
Specifically, we examine how different model classes handle the unresolved variability encoded in \(\mathbb{P}(\Theta | X = x)\), which we refer to as the conditional mechanism distribution or, by analogy with regression analysis, the residual mechanism component. 
By organizing classical models according to their treatment of this object, we aim to clarify their conceptual scope, their implicit assumptions, and the sources of their structural limitations.

\subsection{Marginal Hazard Modeling and the Cox Framework}
\label{sec:cox}

The Cox's proportional hazards model is a cornerstone of modern survival analysis.
Within the latent hazard framework developed in this paper, it can be understood as adopting a specific strategy for handling the conditional mechanism distribution \(\mathbb{P}(\Theta | X = x)\).
Rather than modeling this distribution explicitly, the Cox model bypasses the mechanism level altogether by imposing structure only on an aggregated, population-level quantity.
In the presence of latent individual hazard mechanisms, the object of direct modeling is the observable or group-level hazard, given in our framework by the survivor-weighted posterior expectation of individual hazard trajectories conditional on remaining at risk.
This observable hazard already integrates over the conditional mechanism distribution \(\mathbb{P}(\Theta | X = x)\) as well as the dynamic selection induced by survival up to time \(t\).
The Cox model specifies that this quantity factorizes as
\[
    h_{\mathrm{obs}}(t | x) = h_0(t)\exp(\beta^\top x),
\]
thereby sidestepping the problem of modeling or identifying \(\mathbb{P}(\Theta | X = x)\) at the individual level.
Under this formulation, the hazard ratio between two covariate values \(x_1\) and
\(x_2\),
\[
    \frac{h_{\mathrm{obs}}(t | x_1)}{h_{\mathrm{obs}}(t | x_2)}
    =
    \exp\!\big(\beta^\top (x_1 - x_2)\big),
\]
should be interpreted as a comparison between two \emph{survivor-weighted posterior mixtures} of individual hazard mechanisms.
It does not represent a comparison of intrinsic individual-level risks, but rather reflects how covariates reweight the conditional mechanism distribution among individuals who remain in the risk set at time \(t\).
Proportional hazards is therefore a statement about the relative behavior of observable mixtures, not about proportionality of individual hazard trajectories.

This marginal modeling strategy has direct consequences at the mechanism level.
Compatibility with proportional hazards at the observable level severely restricts the admissible forms of conditional mechanism heterogeneity.
In particular, genuine variation in hazard shape, timing, or modality across individuals cannot be represented, as all such variation must be absorbed into risk set selection and aggregation.
From the same perspective, the baseline hazard \(h_0(t)\) should be viewed as a statistical device introduced to represent the observable hazard, rather than as the baseline of an underlying individual hazard mechanism.
Its form is not uniquely determined by the data and carries no direct mechanistic
interpretation.
A formal characterization of the mechanism-level restrictions implied by proportional hazards, including sufficient conditions under which observable proportionality can arise from latent mechanisms, is provided in Appendix~\ref{app:cox-mechanism}.

\subsection{Low-Dimensional Mechanism Variation: Frailty Models}

Frailty models are commonly introduced as extensions of the Cox proportional hazards model designed to account for unobserved heterogeneity among individuals \cite{vaupel1979impact, hougaard1995frailty}.
From the perspective adopted in this paper, their defining feature is not the introduction of random effects per se, but the particular way in which they restrict the conditional mechanism distribution \(\mathbb{P}(\Theta | X = x)\).
In a typical frailty formulation, the individual hazard function is written as
\[
    h(t | X = x, Z = z) = z \, h_0(t)\exp(\beta^\top x),
\]
where \(Z\) is a non-negative random variable, independent of \(X\), commonly referred to as the frailty term.
At the mechanism level, this specification amounts to assuming that all individual hazard mechanisms share a common baseline shape and differ only through a multiplicative, time-invariant scaling factor.
Consequently, the conditional mechanism distribution is restricted to a one-dimensional family of proportional hazards indexed by \(Z\).

This explicit acknowledgment of residual mechanism variability represents a first step beyond the Cox model, but it comes with severe structural constraints.
Because heterogeneity is confined to a single scalar dimension, frailty models cannot represent differences in hazard shape, timing, or modality; in particular, mechanisms corresponding to early- versus late-onset risk, multi-phase hazard trajectories, or non-proportional effects lie outside the admissible mechanism space.
Moreover, this low-dimensional representation does not resolve the fundamental non-identifiability of \(\mathbb{P}(\Theta | X = x)\), but instead renders the problem tractable by imposing a strong assumption: conditional on covariates, individual mechanisms vary only through a time-invariant random effect.
From this viewpoint, frailty models should be understood as approximations that trade expressive power for interpretability and estimability, and whose primary role is to adjust population-level inference in the presence of unobserved heterogeneity rather than to recover individual-level mechanisms or identify latent subtypes.
From this perspective, frailty models do not represent a general framework for individual heterogeneity, but a specific parametric closure of the conditional mechanism distribution.
The latent hazard framework adopted here reverses this logic: the mechanism distribution is taken as primitive, and classical models, including frailty specifications, are understood as particular restrictions imposed for tractability rather than as solutions to identifiability.

\subsection{Residual Mechanisms and Time-Scale Transformations: AFT Models}
\label{sec:aft}

In the classical literature \cite{kalbfleisch2002statistical}, AFT models are most commonly introduced through a regression formulation on the logarithmic time scale,
\[
    \log T = \mu(X) + \varepsilon,
\]
or, equivalently, through a time–scaling representation of the survival function,
\[
    S(t | X) = S_0\!\left(t\,e^{-\mu(X)}\right),
\]
which corresponds to a systematic rescaling of the time axis \(t \mapsto t/a(X)\), with \(a(X)=e^{\mu(X)}\).
In traditional treatments, the regression form and the distribution of the error term \(\varepsilon\) are often regarded as the primary modeling components.

Within the mechanism-based framework developed in this paper, we adopt a different perspective.
Rather than viewing AFT models as regression models for survival times, we interpret them as imposing a structural constraint on the conditional mechanism distribution \(\mathbb{P}(\Theta | X = x)\).
Specifically, AFT models assume the existence of a single reference hazard mechanism from which all individual hazard mechanisms are generated through time–scale transformations.
Conditional on \(X=x\), residual heterogeneity among individual mechanisms is therefore confined to acceleration or deceleration along a common hazard shape.
Under this interpretation, the error term \(\varepsilon\) in the classical AFT representation is not an independent modeling choice, but an induced quantity reflecting the survival time generated by the reference mechanism under a logarithmic time transformation.
Consequently, the conventional choice of an error distribution implicitly corresponds to a choice of the reference hazard shape, rather than introducing an additional degree of freedom in the conditional mechanism distribution.

This formulation makes explicit that AFT models impose a strong compression of
\(\mathbb{P}(\Theta | X = x)\).
All conditional heterogeneity is restricted to a one-dimensional time–scaling factor, precluding genuine variation in hazard shape, timing, or modality across
individuals.
As in the case of frailty models, this structural simplicity confers interpretability and stability, but it does not resolve the fundamental non-identifiability of the conditional mechanism distribution.
A formal mechanism-based generative characterization of AFT models, together with its implications for the structure of \(\mathbb{P}(\Theta | X)\), is provided in Appendix~\ref{app:aft-mechanism}.

\subsection{Discrete Mechanism Representations and Survival Clustering}

Survival clustering and latent class approaches are motivated by the observation that heterogeneous survival patterns may reflect the presence of distinct, unobserved subtypes within a population.
Within the mechanism-based framework developed here, these methods can be understood as adopting a specific and highly structured strategy for handling the conditional mechanism distribution $\mathbb{P}(\Theta | X = x)$.

Recent survival clustering methods, such as the Survival Cluster Analysis framework \cite{chapfuwa2020survival}, exemplify this strategy by introducing a discrete latent structure into time-to-event models.
Rather than aiming to identify individual-level mechanisms, such approaches impose a structured approximation to $\mathbb{P}(\Theta | X = x)$ in order to obtain tractable and interpretable representations of heterogeneity.
At the mechanism level, survival clustering models typically posit that the conditional distribution of individual mechanisms admits a finite discrete representation,
\[
    \mathbb{P}(\Theta | X = x) = \sum_{k=1}^K \pi_k(x)\delta_{\Theta_k},
\]
where each $\Theta_k$ corresponds to a distinct hazard mechanism or subtype,  $\delta_{\Theta_k}$ denotes a Dirac measure concentrated at $\Theta_k$, and $\pi_k(x)$ denotes covariate-dependent mixing weights.
In many practical implementations, each component may correspond to a class-specific family of hazard models rather than a single fixed hazard; here we adopt an abstracted mechanism-level view in which each component is represented by a representative mechanism.
Under this assumption, heterogeneity is not modeled as continuous variation around a common mechanism, but rather as arising from a finite collection of qualitatively distinct mechanisms.

From this perspective, survival clustering replaces the unresolved variability in \(\mathbb{P}(\Theta | X = x)\) with a strong discretization assumption. 
The conditional mechanism distribution is no longer treated as an unknown and potentially complex object, but is approximated by a finite mixture supported on a small number of latent classes. 
This discretization renders the problem tractable and provides a direct interpretation in terms of subtypes, but it does so by substantially restricting the admissible mechanism space.

It is important to emphasize that this strategy does not circumvent the fundamental non-identifiability of \(\mathbb{P}(\Theta | X = x)\). 
The discrete representation is not learned from the data in an unconstrained sense; rather, it is imposed as a modeling assumption. 
In particular, the assignment of individuals to latent classes, as well as the interpretation of these classes as distinct mechanisms, relies on the assumption that the observed survival and covariate information is sufficient to support such a discretization. 
In general, this assumption cannot be justified without additional structural restrictions.

Viewed through the lens of mechanism conditioning, survival clustering methods are therefore best understood as approximation schemes rather than identification procedures. 
They offer a convenient and interpretable way to summarize heterogeneity by partitioning the mechanism space into a finite number of representative components, but this comes at the cost of introducing strong and often implicit assumptions about the nature of the conditional mechanism distribution. 
Their primary contribution lies in providing a low-complexity representation of heterogeneity, rather than in recovering individual-level mechanisms or uncovering uniquely identifiable subtypes.

\section{Discussion and Concluding Remarks}
\label{sec:discussion}

This paper was motivated by a persistent ambiguity in survival analysis concerning the relationship between individual-level risk and population-level observable quantities.
By introducing a latent hazard framework, we made explicit the information structure linking individual hazard-generating mechanisms, observable covariates, and survival outcomes.
Within this framework, classical survival quantities were shown to arise as aggregated objects, obtained by averaging over latent individual heterogeneity.
This perspective allows a unified treatment of identifiability and provides a conceptual basis for reinterpreting a broad class of survival models.

A central conclusion of the paper is that individual-level hazard trajectories are not identifiable from survival data under partial information.
More strongly, even the conditional distribution of individual mechanisms given covariates, $\mathbb{P}(\Theta | X=x)$, is generally not identifiable.
This lack of identifiability is structural rather than technical: it arises from the aggregation inherent in survival data and persists regardless of model flexibility or estimation strategy.
What is identifiable from the data are population-level or group-level survival and hazard functions, which summarize the behavior of heterogeneous individuals but do not correspond to any intrinsic individual-level risk quantity except in degenerate cases.

\paragraph{Implications for interpretation and prediction.}

These results have direct implications for the interpretation of fitted survival models.
In particular, observable hazards should not be interpreted as individual risk functions.
Even when conditioned on covariates, fitted hazards represent survivor-weighted averages over latent individual mechanisms.
Interpreting such quantities as characterizing individual risk implicitly assumes that the conditional mechanism distribution collapses to a point, an assumption that is rarely justified in practice.
The latent hazard framework clarifies that many common interpretations of model output rely on strong and often unacknowledged structural assumptions.

From this perspective, a key implication concerns the scope of interpretation in inferential settings.
Quantities derived from fitted survival models may support valid population-level inference within their modeling assumptions, but they do not automatically admit individual-level or mechanistic interpretations.
Such interpretations require additional structural assumptions linking the observable representation to an underlying hazard-generating mechanism.
When these assumptions are not made explicit, inferential conclusions should be understood as pertaining to aggregated or projected quantities, rather than to intrinsic individual risk processes.
Clarifying this distinction does not undermine classical inferential methods, but delineates the domain within which their interpretations remain conceptually coherent.

These interpretational issues become more severe in predictive settings.
In many contemporary applications, survival models are evaluated primarily through predictive performance measures, such as discrimination or calibration.
While such criteria are appropriate for assessing predictive accuracy, they do not by themselves justify mechanistic, causal, or individual-level interpretations of the resulting risk scores.
Within the latent hazard framework, predictive success may arise from stable population-level associations, even when no identifiable individual-level hazard mechanism is recovered.
Without a clear separation between predictive objectives and inferential interpretation, there is a risk that predictive performance is overstated as evidence of understanding underlying risk processes.
Making these distinctions explicit is therefore essential when survival models are deployed in modern predictive and machine-learning-based contexts.

\paragraph{Implications for model choice and methodological development.}

The non-identifiability of $\mathbb{P}(\Theta | X=x)$ also reframes the role of model choice in survival analysis.
Different modeling strategies correspond to different ways of handling this unresolved structure.
Some approaches, such as proportional hazards models, bypass the mechanism level by imposing structure directly on observable hazards.
Others, such as frailty or accelerated failure time models, restrict conditional heterogeneity to low-dimensional families.
Latent class and survival clustering methods discretize the mechanism space. 
From this perspective, model selection is not merely a question of predictive performance or goodness of fit, but a choice about which structural assumptions one is willing to impose on an intrinsically unidentifiable object.
%
%

These considerations are particularly relevant for individual-level risk prediction and subgroup discovery.
The results of this paper suggest that individual risk prediction is not primarily an algorithmic problem, but a structural one: meaningful individual
predictions require strong assumptions about the form of the conditional mechanism distribution.
Similarly, survival clustering methods should be understood as approximation or summarization schemes rather than as procedures that recover uniquely identifiable subtypes.
Their outputs depend fundamentally on the imposed discretization of an underlying distribution that is not identifiable from the data.
%

Finally, the latent hazard framework points to a direction for future methodological development.
Rather than seeking increasingly flexible regressions for observable hazards, methodological innovation may lie in developing principled ways to constrain, approximate, or interpret the conditional mechanism distribution.
This includes explicit modeling of mechanism heterogeneity, incorporation of scientific structure, or representation-based approaches that acknowledge the limits of identifiability.
Such developments would not eliminate the fundamental non-identifiability established here, but could lead to models whose assumptions are clearer and whose interpretations are better aligned with the information contained in survival data\\

In conclusion, the contribution of this paper is not the proposal of a new survival model, but a clarification of what survival analysis can and cannot identify under partial information.
By making the latent structure explicit, the framework developed here provides a coherent interpretation of classical models and highlights the structural assumptions underlying individual-level inference.
We hope that this perspective will help align methodological development and applied practice with the intrinsic limits of survival data, and encourage the development of survival models whose assumptions and interpretations are explicitly grounded in the available information structure.

\section*{Acknowledgements}
The author thanks Rolf Larsson for helpful comments on an earlier version of this manuscript.
This work was funded by an unrestricted grant from the Swedish Research Council (2024-02846).


\bibliographystyle{plainnat}
\bibliography{ref} 

@article{cox1972regression,
  title        = {Regression models and life-tables},
  author       = {Cox, D. R.},
  journal      = {Journal of the Royal Statistical Society: Series B (Methodological)},
  volume       = {34},
  number       = {2},
  pages        = {187--220},
  year         = {1972}
}

@book{kalbfleisch2002statistical,
  title     = {The Statistical Analysis of Failure Time Data},
  author    = {Kalbfleisch, John D. and Prentice, Ross L.},
  edition   = {2nd},
  year      = {2002},
  publisher = {John Wiley \& Sons},
  address   = {Hoboken, NJ},
  note      = {Comprehensive reference on survival models and hazard shape classification.}
}

@article{Buckley1979AFT,
  title   = {Linear Regression with Censored Data}, 
  author  = {Buckley, John and James, Ian},
  journal = {Biometrika},
  volume  = {66},
  number  = {3},
  pages   = {429--436},
  year    = {1979},
  doi     = {10.1093/biomet/66.3.429}
}

@article{Ishwaran2008RSF,
  title   = {Random Survival Forests},
  author  = {Ishwaran, Hemant and Kogalur, Udaya B. and Blackstone, Eugene H. and Lauer, Michael S.},
  journal = {The Annals of Applied Statistics},
  volume  = {2},
  number  = {3},
  pages   = {841--860},
  year    = {2008},
  doi     = {10.1214/08-AOAS169}
}

@article{Katzman2018,
  author  = {Katzman, Jared L. and Shaham, Uri and Bates, Jonathan and Cloninger, Alexander and Jiang, Tingting and Kluger, Yuval},
  title   = {DeepSurv: Personalized Treatment Recommender System Using a Cox Proportional Hazards Deep Neural Network},
  journal = {BMC Medical Research Methodology},
  volume  = {18},
  number  = {1},
  pages   = {24},
  year    = {2018}
}

@inproceedings{lee2018deephit,
  title={Deephit: A deep learning approach to survival analysis with competing risks},
  author={Lee, Changhee and Zame, William and Yoon, Jinsung and Van Der Schaar, Mihaela},
  booktitle={Proceedings of the AAAI conference on artificial intelligence},
  volume={32},
  year={2018}
}

@article{vaupel1979impact,
  title={The impact of heterogeneity in individual frailty on the dynamics of mortality},
  author={Vaupel, James W and Manton, Kenneth G and Stallard, Eric},
  journal={Demography},
  volume={16},
  number={3},
  pages={439--454},
  year={1979},
  publisher={Springer}
}

@article{hougaard1995frailty,
  title={Frailty models for survival data},
  author={Hougaard, Philip},
  journal={Lifetime data analysis},
  volume={1},
  number={3},
  pages={255--273},
  year={1995},
  publisher={Springer}
}

@article{Wang2021Survey,
  title   = {Machine Learning Methods for Survival Analysis: A Survey},
  author  = {Wang, Peng and Li, Yan and Reddy, Chandan K.},
  journal = {ACM Computing Surveys},
  volume  = {54},
  number  = {4},
  pages   = {1--36},
  year    = {2021},
  doi     = {10.1145/3453140}
}

@article{wiegrebe2024deep,
  title={Deep learning for survival analysis: a review},
  author={Wiegrebe, Simon and Kopper, Philipp and Sonabend, Raphael and Bischl, Bernd and Bender, Andreas},
  journal={Artificial Intelligence Review},
  volume={57},
  number={3},
  pages={65},
  year={2024},
  publisher={Springer}
}

@article{proust2014joint,
  title={Joint latent class models for longitudinal and time-to-event data: a review},
  author={Proust-Lima, C{\'e}cile and S{\'e}ne, Mb{\'e}ry and Taylor, Jeremy MG and Jacqmin-Gadda, H{\'e}l{\`e}ne},
  journal={Statistical methods in medical research},
  volume={23},
  number={1},
  pages={74--90},
  year={2014},
  publisher={SAGE Publications Sage UK: London, England}
}

@article{lin2002latent,
  title={Latent class models for joint analysis of longitudinal biomarker and event process data: application to longitudinal prostate-specific antigen readings and prostate cancer},
  author={Lin, Haiqun and Turnbull, Bruce W and McCulloch, Charles E and Slate, Elizabeth H},
  journal={Journal of the American Statistical Association},
  volume={97},
  number={457},
  pages={53--65},
  year={2002},
  publisher={Taylor \& Francis}
}

@inproceedings{chapfuwa2020survival,
  title={Survival cluster analysis},
  author={Chapfuwa, Paidamoyo and Li, Chunyuan and Mehta, Nikhil and Carin, Lawrence and Henao, Ricardo},
  booktitle={Proceedings of the ACM Conference on Health, Inference, and Learning},
  pages={60--68},
  year={2020}
}

@article{martinussen2013collapsibility,
  title={On collapsibility and confounding bias in Cox and Aalen regression models},
  author={Martinussen, Torben and Vansteelandt, Stijn},
  journal={Lifetime data analysis},
  volume={19},
  number={3},
  pages={279--296},
  year={2013},
  publisher={Springer}
}

@article{Grambsch2017,
  author  = {Grambsch, Patricia M.},
  title   = {Goodness-of-Fit and Diagnostics for Proportional Hazards Regression Models},
  journal = {Cancer Treatment and Research},
  volume  = {170},
  pages   = {221--240},
  year    = {2017}
}

@article{dumas2025hazard,
  title={How hazard ratios can mislead and why it matters in practice},
  author={Dumas, Elise and Stensrud, Mats J},
  journal={European Journal of Epidemiology},
  pages={1--7},
  year={2025},
  publisher={Springer}
}

@book{aalen2008survival,
  title={Survival and event history analysis: a process point of view},
  author={Aalen, Odd O and Borgan, {\O}rnulf and Gjessing, H{\aa}kon K},
  year={2008},
  publisher={Springer}
}

@book{Kagan1973,
  author    = {Kagan, A. M. and Linnik, Yu. V. and Rao, C. R.},
  title     = {Characterization Problems in Mathematical Statistics},
  publisher = {John Wiley \& Sons},
  year      = {1973},
  address   = {New York}
}

@book{LehmannCasella1998,
  author    = {Lehmann, E. L. and Casella, G.},
  title     = {Theory of Point Estimation},
  edition   = {2},
  publisher = {Springer},
  year      = {1998},
  address   = {New York}
}

@article{austin2020graphical,
  title={Graphical calibration curves and the integrated calibration index (ICI) for survival models},
  author={Austin, Peter C and Harrell Jr, Frank E and van Klaveren, David},
  journal={Statistics in Medicine},
  volume={39},
  number={21},
  pages={2714--2742},
  year={2020},
  publisher={Wiley Online Library}
}

@article{steyerberg2010assessing,
  title={Assessing the performance of prediction models: a framework for traditional and novel measures},
  author={Steyerberg, Ewout W and Vickers, Andrew J and Cook, Nancy R and Gerds, Thomas and Gonen, Mithat and Obuchowski, Nancy and Pencina, Michael J and Kattan, Michael W},
  journal={Epidemiology},
  volume={21},
  number={1},
  pages={128--138},
  year={2010},
  publisher={LWW}
}


\appendix

\section{Proof of Theorem \ref{thm:representation}}
\label{proof:thm_representation}

\begin{proof}
    We prove each statement in turn.
    \medskip
    \noindent\textit{(i) Population survival.}
    By definition,
    \[
        S(t)=\mathbb P(T>t)=\mathbb E[\mathbf 1_{\{T>t\}}].
    \]
    Applying iterated expectation with respect to $\Theta$ yields
    \[
        \mathbb E[\mathbf 1_{\{T>t\}}] = \mathbb E\!\left[\mathbb E\!\left[\mathbf 1_{\{T>t\}}| \Theta\right] \right].
    \]
    Conditional on $\Theta=\theta$, the event time distribution is determined by the hazard $h_\theta$, and hence
    \[
        \mathbb E[\mathbf 1_{\{T>t\}}| \Theta=\theta] = \mathbb P(T>t| \Theta=\theta) = S_\theta(t).
    \]
    Therefore $S(t)=\mathbb E[S_\Theta(t)]$.
    
    \medskip
    \noindent\textit{(ii) Group-level survival.}
    For any $x$ with $\mathbb P(X=x)>0$,
    \[
        S(t| x) = \mathbb P(T>t| X=x) = \mathbb E[\mathbf 1_{\{T>t\}}| X=x].
    \]
    Using iterated conditional expectation with respect to $\Theta$ gives
    \[
        \mathbb E[\mathbf 1_{\{T>t\}}| X=x] = \mathbb E\!\left[ \mathbb E[\mathbf 1_{\{T>t\}}| \Theta = \theta, X=x] | X=x \right].
    \]
    By mechanism sufficiency, the conditional distribution of $T$ given $\Theta$ does not depend on $X$, so that
    \[
        \mathbb E[\mathbf 1_{\{T>t\}}| \Theta=\theta,X=x] = \mathbb P(T>t| \Theta=\theta) = S_\theta(t).
    \]
    Hence
    \[
        S(t| x)=\mathbb E[S_\Theta(t)| X=x].
    \]
    
    \medskip
    \noindent\textit{(iii) Observable hazard.}
    Assume that differentiation and conditional expectation may be interchanged.
    Differentiating the group-level survival function yields
    \[
        \frac{d}{dt}S(t| x) =
        \mathbb E\!\left[\frac{d}{dt}S_\Theta(t)| X=x\right] =
        -\mathbb E\!\left[h_\Theta(t)S_\Theta(t)| X=x\right],
    \]
    since $\frac{d}{dt}S_\theta(t)=-h_\theta(t)S_\theta(t)$ for each $\theta$.
    Therefore,
    \[
        h_{\mathrm{obs}}(t| x) =
        -\frac{d}{dt}\log S(t| x) =
        \frac{\mathbb E\!\left[h_\Theta(t)S_\Theta(t)| X=x\right]}
             {\mathbb E\!\left[S_\Theta(t)| X=x\right]}.
    \]
    Finally, by mechanism sufficiency, noting that
    \[ 
        \mathbb P(\Theta\in d\theta| T\ge t,X=x) \propto S_\theta(t)\,\mathbb P_\Theta(d\theta| X=x)
        \footnote{
            Formally, $\mathbb P(\Theta\in d\theta | T\ge t, X=x)$ denotes the regular conditional probability measure on $(\mathcal M,\mathcal B)$. 
            The notation $d\theta$ is used in the sense of integration against measurable functions.
        },
    \]
    where the proportionality reflects a Bayesian update of the mechanism distribution: among individuals with covariate value $x$, mechanisms that are more likely to survive up to time $t$ are over-represented in the risk set at time $t$, the above ratio is exactly the conditional expectation
    \[
        h_{\mathrm{obs}}(t| x) = \mathbb E\!\left[h_\Theta(t)| T\ge t,X=x\right],
    \]
    which completes the proof.
\end{proof}

\section{Proof of Theorem \ref{thm:nonid-individual}}
\label{proof:thm_nonid}

\begin{proof}
    Fix a covariate value $x$ and denote the observed population-level survival function by $S(t):=S(t| x)$ for $t\ge 0$. 
    We construct infinitely many distinct conditional distributions $\mathbb P^{(k)}_\Theta(\cdot| X=x)$ that induce the same survival function $S(t)$ on a fixed mechanism space $(\mathcal M,\mathcal B)$.
    
    Since $S(\cdot| x)$ is well defined and we study the non-identifiability of $\mathbb P_\Theta(\cdot| X=x)$ on a fixed mechanism space, let $\mu_0$ be a reference conditional distribution on $(\mathcal M,\mathcal B)$ that induces $S(\cdot| x)$ under the observation operator, i.e.,
    \begin{equation}
    \label{eq:baseline-mu0}
        S(t)=\int_{\mathcal M} S_\theta(t)\,\mu_0(d\theta),\qquad \forall\, t\ge 0.
    \end{equation}
    In modeling terms, $\mu_0$ may be interpreted as a candidate conditional mechanism distribution $\mathbb P_\Theta(\cdot| X=x)$.
    
    By the conditional local richness assumption under $X=x$, there exist a reference mechanism $\theta_0\in\mathcal M$, a bounded measurable function $g:[0,\infty)\to\mathbb R$ that is not identically zero, a constant $\delta>0$, and a measurable mapping $\varepsilon\mapsto\theta(\varepsilon)\in\mathcal M$ for $\varepsilon\in(-\delta,\delta)$ such that, for all $\varepsilon\in(-\delta,\delta)$,
    \begin{equation}
    \label{eq:local-variation}
        S_{\theta(\varepsilon)}(t)=S_{\theta_0}(t)\bigl(1+\varepsilon g(t)\bigr),\qquad \forall\, t\ge 0,
    \end{equation}
    and $S_{\theta(\varepsilon)}(\cdot)$ defines a valid survival function for each $\varepsilon$.
    We now redistribute mass in the conditional mechanism distribution along directions that are indistinguishable under the observation operator. Fix any $\alpha\in(0,1)$ and define
    \[
        \delta'=\min\left\{\delta,\frac{1-\alpha}{\alpha}\delta\right\}.
    \]
    For $\varepsilon\in(-\delta',\delta')$, let $\varepsilon'=-\frac{\alpha}{1-\alpha}\varepsilon$ and define the two-point mixture measure
    \begin{equation}
    \label{eq:mu-eps}
        \nu_\varepsilon := \alpha\,\delta_{\theta(\varepsilon)}+(1-\alpha)\,\delta_{\theta(\varepsilon')}.
    \end{equation}
    We then perform a local replacement of $\mu_0$ by defining
    \begin{equation}
    \label{eq:mu0-eps}
        \mu_\varepsilon := \mu_0-\eta\,\delta_{\theta_0}+\eta\,\nu_\varepsilon,
    \end{equation}
    where $\eta\in(0,1)$ is chosen such that $0<\eta\le \mu_0(\{\theta_0\})$, ensuring that $\mu_\varepsilon$ is a nonnegative measure with total mass one. 
    (If $\mu_0(\{\theta_0\})=0$, one may first perform an equivalent infinitesimal adjustment of $\mu_0$ so that it assigns positive mass to $\theta_0$; this does not affect \eqref{eq:baseline-mu0} and serves only as a technical convenience.)
    Combining \eqref{eq:baseline-mu0} and \eqref{eq:mu0-eps}, for any $t\ge 0$ we obtain
    \begin{align*}
        \int_{\mathcal M} S_\theta(t)\,\mu_\varepsilon(d\theta)
        &=
        \int_{\mathcal M} S_\theta(t)\,\mu_0(d\theta)
        -\eta S_{\theta_0}(t)
        +\eta\int_{\mathcal M} S_\theta(t)\,\nu_\varepsilon(d\theta)\\
        &=
        S(t)-\eta S_{\theta_0}(t)
        +\eta\Bigl[\alpha S_{\theta(\varepsilon)}(t)+(1-\alpha)S_{\theta(\varepsilon')}(t)\Bigr].
    \end{align*}
    Using \eqref{eq:local-variation}, we further obtain
    \begin{align*}
        \alpha S_{\theta(\varepsilon)}(t)+(1-\alpha)S_{\theta(\varepsilon')}(t)
        &=
        \alpha S_{\theta_0}(t)\bigl(1+\varepsilon g(t)\bigr)
        +(1-\alpha)S_{\theta_0}(t)\bigl(1+\varepsilon' g(t)\bigr)\\
        &=
        S_{\theta_0}(t)\Bigl[1+\bigl(\alpha\varepsilon+(1-\alpha)\varepsilon'\bigr)g(t)\Bigr].
    \end{align*}
    Since $\varepsilon'=-\frac{\alpha}{1-\alpha}\varepsilon$, we have
    \[
        \alpha\varepsilon+(1-\alpha)\varepsilon'=0,
    \]
    and therefore
    \[
        \alpha S_{\theta(\varepsilon)}(t)+(1-\alpha)S_{\theta(\varepsilon')}(t)=S_{\theta_0}(t).
    \]
    Substituting back yields
    \[
        \int_{\mathcal M} S_\theta(t)\,\mu_\varepsilon(d\theta)
        =
        S(t)-\eta S_{\theta_0}(t)+\eta S_{\theta_0}(t)
        =
        S(t),\qquad \forall\, t\ge 0.
    \]
    Thus, for each sufficiently small $\varepsilon$, the measure $\mu_\varepsilon$ induces the same population-level survival function $S(\cdot| x)$.
    
    Because $g\not\equiv 0$, there exists $t_\ast$ such that $g(t_\ast)\neq 0$. For $\varepsilon\neq \tilde\varepsilon$, equation \eqref{eq:local-variation} implies
    $S_{\theta(\varepsilon)}(t_\ast)\neq S_{\theta(\tilde\varepsilon)}(t_\ast)$, and hence $\theta(\varepsilon)\neq \theta(\tilde\varepsilon)$ in the sense of distinct survival trajectories. Consequently, the two-point mixtures $\nu_\varepsilon$ and $\nu_{\tilde\varepsilon}$ have different supports, and therefore $\mu_\varepsilon\neq \mu_{\tilde\varepsilon}$.
    By choosing a sequence $\varepsilon_k\downarrow 0$ with $\varepsilon_k\neq \varepsilon_\ell$ for $k\neq \ell$, we obtain infinitely many distinct conditional distributions $\{\mu_{\varepsilon_k}\}_{k\ge 1}$ that all induce the same population-level survival function $S(\cdot| x)$. Hence, the observation operator $\mathcal O$ admits infinitely many distinct preimages for the same image $S(\cdot| x)$ and is therefore non-injective at the path level. As a result, the conditional mechanism distribution $\mathbb P_\Theta(\cdot| X=x)$ cannot be uniquely identified from $S(\cdot| x)$. 
    
    Finally, in settings where individual hazard trajectories are defined by
    $h_\theta(t)=-\frac{d}{dt}\log S_\theta(t)$, the same argument implies that individual-level hazard paths are likewise not identifiable.
\end{proof}

\section{Mechanism-Level Implications of the Cox Model}
\label{app:cox-mechanism}

In this appendix, we provide a formal characterization of the mechanism-level restrictions implied by imposing a proportional hazards structure on the observable hazard.
This material supports the interpretation given in Section \ref{sec:cox} and is included here to separate formal arguments from the main conceptual discussion.

Let \(\Theta\) denote a latent individual hazard mechanism taking values in a (measurable) mechanism space \(\mathcal{M}\), and let \(\mathcal H(\Theta)=h_\Theta\) denote the induced individual hazard function.
For notational clarity, we consider a finite mechanism space \(\mathcal{M}=\{\theta_1,\ldots,\theta_K\}\) with corresponding hazard trajectories \(\{h_1(t),\ldots,h_K(t)\}\).
Given covariates \(X=x\), the observable hazard at time \(t\) can be written as
\[
    h_{\mathrm{obs}}(t | x)
    =
    \mathbb{E}\!\left[
        h_\Theta(t)
        \,|\,
        T \ge t,\, X = x
    \right]
    =
    \sum_{k=1}^K
    w_k(t,x)\,h_k(t),
\]
where
\[
    w_k(t,x)
    =
    \mathbb{P}(\Theta=\theta_k | T \ge t,\, X=x)
\]
denotes the survivor-weighted posterior probability of mechanism \(\theta_k\).
We now state a sufficient condition under which proportional hazards at the observable level can arise from latent mechanism heterogeneity.

\begin{proposition}[Mechanism-level implications of proportional hazards]
\label{prop:mechanismPH}
    Assume that the observable hazard satisfies
    \[
        h_{\mathrm{obs}}(t | x)
        =
        h_0(t)\exp(\beta^\top x)
        \qquad
        \text{for all } t \ge 0 \text{ and all } x,
    \]
    and that the posterior weights \(w_k(t,x)\) vary nontrivially with \(x\).
    Then there exists a nonnegative function \(h_\ast(t)\) and positive constants \(c_1,\ldots,c_K\) such that
    \[
        h_k(t) = c_k\,h_\ast(t),
        \qquad
        k=1,\ldots,K.
    \]
    That is, proportional hazards at the observable level are compatible with latent mechanism heterogeneity only if all admissible individual hazard trajectories share a common time shape and differ solely by a multiplicative scale factor.
\end{proposition}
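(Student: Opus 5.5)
The plan is to reduce everything to the group-level survival functions, where the mixture structure is linear, and then exploit the rigidity of proportional hazards across covariate values. Write $p_k(x)=\mathbb P(\Theta=\theta_k| X=x)$ and $S_k(t)=\exp(-\int_0^t h_k)$. By Theorem~\ref{thm:representation}(ii),
\[
S(t| x)=\sum_{k=1}^K p_k(x)S_k(t).
\]
The proportional hazards assumption integrates to $S(t| x)=S_0(t)^{r(x)}$, where $r(x)=\exp(\beta^\top x)$ and $S_0=\exp(-\int_0^\cdot h_0)$. I will reparametrize time by $u=-\log S_0(t)$, which is monotone wherever $h_0>0$, and set $f_k(u)=S_k(t(u))$. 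The identity then becomes
\[
\sum_k p_k(x)f_k(u)=e^{-r(x)u}\quad\text{for all } u,x.
\]
Thus every exponential $e^{-ru}$ with $r$ in the range of $r(\cdot)$ lies in the span of $f_1,\dots,f_K$.

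The key step is to combine this with the linear independence of exponentials with distinct rates. If $r(\cdot)$ takes more than $K$ distinct values, the identity cannot hold with $K$ fixed curves. This already shows that observable proportional hazards severely constrains the latent mechanisms. When it can hold, I would invert the (at most $K\times K$) linear system to express each $f_k$ through the available exponentials. From there I would aim to show $f_k(u)=e^{-c_ku}$, which is equivalent to $h_k=c_k h_0$, so one may take $h_\ast=h_0$.

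A complementary route, which I would carry out in parallel, differentiates the hazard identity $h_{\mathrm{obs}}(t| x)=\sum_k w_k(t,x)h_k(t)$ in $x$. Since $\partial_x\log h_{\mathrm{obs}}=\beta$ is constant in $t$, and $w_k(t,x)\propto p_k(x)S_k(t)$ gives $\partial_x w_k=w_k(a_k-\bar a)$ with $a_k=\partial_x\log p_k$, the identity becomes
\[
\sum_k w_k(a_k-\bar a)h_k=\beta\sum_k w_k h_k.
\]
Requiring this for all $t$ while the weights vary nontrivially should force the ratios $h_k/h_j$ to be constant in $t$.

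The main obstacle is exactly this last deduction. Exponential independence only yields $f_k\in\mathrm{span}\{e^{-r_ju}\}$, that is, each $S_k$ is a finite mixture of powers of $S_0$. It does not by itself give a single power, and a mixture of such powers generally has a hazard that is not proportional to $h_0$. I would therefore first try to show that the nontrivial-variation hypothesis, read as ``the identity holds for a continuum of values of $r(x)$'', rules out the degenerate finite-range case entirely. If instead it only allows finitely many values, I expect to need a strengthened hypothesis, for example that the $p_k(x)$ vary over an open subset of the simplex. Under that hypothesis each $f_k$ must individually be an exponential in $u$, which is the claimed common-shape factorization. If neither version goes through, the argument as planned would establish only the weaker conclusion that each $S_k$ is a finite mixture of powers of $S_0$, not the factorization $h_k=c_kh_\ast$.
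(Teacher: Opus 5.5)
The obstacle you isolate in your last paragraph cannot be removed by more work. The proposition is false under its stated hypotheses, and your reduction $\sum_k p_k(x)f_k(u)=e^{-r(x)u}$ produces a counterexample directly. Take $x\in\{0,1\}$, $K=2$, $h_0\equiv 1$ (so $u=t$), $\beta=\log 2$, and
\[
S_1(t)=\tfrac43e^{-t}-\tfrac13e^{-2t},\qquad S_2(t)=e^{-2t},\qquad p(0)=\bigl(\tfrac34,\tfrac14\bigr),\qquad p(1)=(0,1).
\]
Here $S_1(0)=1$, $S_1>0$ and $S_1'(t)=-\tfrac23e^{-t}(2-e^{-t})<0$, so $S_1$ is a genuine survival function. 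Also $\tfrac34S_1+\tfrac14S_2=e^{-t}$ and $S(t| 1)=e^{-2t}$, so $h_{\mathrm{obs}}(t| 0)=1=h_0(t)$ and $h_{\mathrm{obs}}(t| 1)=2=h_0(t)e^{\beta}$. The posterior weights $w(t,0)=\bigl(1-\tfrac14e^{-t},\tfrac14e^{-t}\bigr)$ and $w(t,1)=(0,1)$ differ, and they are linearly independent for every $t$, so even the paper's reading of ``vary nontrivially'' is met. Yet $h_2\equiv2$ while $h_1(t)=(4-2e^{-t})/(4-e^{-t})$ increases from $2/3$ to $1$, so no factorization $h_k=c_kh_\ast$ exists. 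This is exactly your scenario: $S_1$ is a linear combination of $S_0$ and $S_0^2$ but not a single power. The coefficient is negative, so say ``linear combination'' rather than ``mixture''.

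The paper's proof misses this because it writes $g(t)=Wh(t)$ with $W_{jk}=w_k(t,x^{(j)})$ and then treats $a=W^{-1}c$ as a fixed vector. But $W$ depends on $t$ through the survivor reweighting $w_k(t,x)\propto p_k(x)S_k(t)$. One only obtains $h(t)=h_0(t)W(t)^{-1}c$, which in the example equals $\bigl((1-\tfrac12e^{-t})/(1-\tfrac14e^{-t}),\,2\bigr)$. Working with survival functions, where the weights $p_k(x)$ do not depend on $t$, is what lets you see this.

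Your proposed rescues do not give a meaningful theorem either:
\begin{itemize}
\item If $r(x)$ takes more than $K$ values (in particular a continuum) and $h_0\not\equiv0$, your exponential-independence argument shows the hypotheses cannot hold, so the claim becomes vacuous.
\item If $\beta^\top x$ is constant on the support (e.g.\ $\beta=0$), a continuum of covariate values does not help. Take $S_{A,B}(t)=e^{-t}\bigl(1\pm\epsilon(1-e^{-t})\bigr)$ and $S_C(t)=e^{-t}$ with $0<\epsilon<1$, and $p(x)=\bigl(\tfrac{1-s(x)}2,\tfrac{1-s(x)}2,s(x)\bigr)$ for any nonconstant smooth $s$. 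This gives $h_{\mathrm{obs}}\equiv1$ while $h_A/h_C$ is not constant. It also defeats your $x$-differentiation route, since your identity holds there with $\beta=0$.
\item Under your open-subset-of-the-simplex hypothesis the conclusion does hold, but only degenerately. Since $r$ takes at most $K$ values, some level set $\{p(x):r(x)=r_j\}$ affinely spans the simplex, and differencing forces $S_1=\dots=S_K$.
\end{itemize}
What your argument does establish is the weaker statement you reach at the end, and that is what should replace the proposition. When the vectors $p(x)$ span $\mathbb R^K$, each $S_k$ is a linear combination of at most $K$ powers $S_0^{r(x)}$, not necessarily a single power.
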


Proposition~\ref{prop:mechanismPH} formalizes the sense in which proportional hazards severely restrict the conditional mechanism distribution \(\mathbb{P}(\Theta | X=x)\).
In particular, any heterogeneity in hazard shape, timing, or modality across individual mechanisms is incompatible with proportionality of the observable hazard.
For expository clarity, the proposition is stated for a finite mechanism space.
The restriction is not essential: analogous results hold for infinite or continuous mechanism spaces under mild regularity conditions, and reflect a structural consequence of proportional hazards rather than an artifact of discretization.

\paragraph{Proof of Proposition \ref{prop:mechanismPH}}
\begin{proof}
    For each covariate value \(x\), the observable hazard admits the survivor-weighted mixture representation
    \[
        h_{\mathrm{obs}}(t | x) = \sum_{k=1}^K w_k(t,x)\,h_k(t), \qquad t \ge 0,
    \]
    where \(w_k(t,x)=\mathbb P(\Theta=\theta_k | T\ge t, X=x)\).
    By assumption, the observable hazard satisfies the proportional hazards form
    \[
        h_{\mathrm{obs}}(t | x) = h_0(t)\exp(\beta^\top x) \qquad \text{for all } t \ge 0 \text{ and all } x .
    \]
    Fix covariate values \(x^{(1)},\ldots,x^{(K)}\) such that the corresponding weight vectors
    \[
        w(x^{(j)}) = \bigl(w_1(t,x^{(j)}),\ldots,w_K(t,x^{(j)})\bigr), \qquad j=1,\ldots,K,
    \]
    are linearly independent for at least one (and hence all) \(t\).
    This is possible by the assumption that the posterior weights vary nontrivially with \(x\).
    If the span of \(\{w(t,x):x\}\) has dimension \(m<K\), the argument can be restricted to an \(m\)-component effective submixture; the conclusion remains unchanged.
    
    Define the \(K\times K\) matrix
    \[
        W = \bigl(W_{jk}\bigr), \qquad W_{jk}=w_k(t,x^{(j)}),
    \]
    which is invertible by construction.
    For each fixed \(t\), collect the individual hazard values into the vector
    \[
        h(t)=\bigl(h_1(t),\ldots,h_K(t)\bigr)^\top,
    \]
    and the corresponding observable hazards into
    \[
        g(t)=\bigl(h_{\mathrm{obs}}(t| x^{(1)}),\ldots,h_{\mathrm{obs}}(t| x^{(K)})\bigr)^\top.
    \]
    By the mixture representation,
    \[
        g(t)=W\,h(t).
    \]
    On the other hand, proportional hazards implies
    \[
        h_{\mathrm{obs}}(t| x^{(j)})=h_0(t)\exp(\beta^\top x^{(j)}), \qquad j=1,\ldots,K,
    \]
    so that
    \[
    g(t)=h_0(t)\,c,
    \qquad
    c_j=\exp(\beta^\top x^{(j)}),
    \]
    where the vector \(c\) does not depend on \(t\).
    Combining the two expressions for \(g(t)\) yields
    \[
        W\,h(t)=h_0(t)\,c.
    \]
    Since \(W\) is invertible, we obtain
    \[
        h(t)=h_0(t)\,W^{-1}c.
    \]
    Thus there exists a fixed vector \(a=W^{-1}c\) such that, for all \(t\),
    \[
        h_k(t)=a_k\,h_0(t), \qquad k=1,\ldots,K.
    \]
    Setting \(h_\ast(t)=h_0(t)\) and \(c_k=a_k>0\) completes the proof.
\end{proof}

\paragraph{On the interpretation of the baseline hazard}

The proportional hazards representation
\[
    h_{\mathrm{obs}}(t | x) = h_0(t)\exp(\beta^\top x)
\]
introduces the baseline hazard \(h_0(t)\) as a component of a factorization of the observable hazard.
From the mechanism-level perspective adopted in this appendix, it is important to distinguish this statistical object from any underlying individual hazard mechanism.

Proposition~\ref{prop:mechanismPH} establishes that, under proportional hazards, all admissible individual hazard trajectories must share a common time shape \(h_\ast(t)\), differing only by a multiplicative scale factor.
This function \(h_\ast(t)\) can be interpreted as a \emph{mechanism-level baseline shape}, in the sense that it characterizes the unique hazard geometry compatible with proportionality at the observable level.
However, the existence of such a function does not imply its identifiability from survival data.

In contrast, the Cox baseline hazard \(h_0(t)\) is not an estimate of \(h_\ast(t)\), nor does it correspond to the hazard of a representative or baseline individual.
Rather, \(h_0(t)\) is a modeling device arising from marginal hazard modeling and from the chosen factorization of the observable hazard.
Its form is not uniquely determined by the data, as different choices of \(h_0(t)\) may lead to the same partial likelihood.
This distinction clarifies why baseline hazards in the Cox model carry no direct mechanistic interpretation, even though proportional hazards impose a strong and well-defined structure on the underlying mechanism space.

\section{Mechanism-Based Generative Structure of AFT Models}
\label{app:aft-mechanism}

In this appendix, we provide a formal mechanism-based characterization of accelerated failure time (AFT) models.
This material supports the interpretation given in Section~\ref{sec:aft} and is included here to separate formal generative arguments from the main conceptual discussion.
Our goal is to make explicit the structural assumptions that AFT models impose on the conditional mechanism distribution \(\mathbb{P}(\Theta | X)\).

Let \(\Theta\) denote a latent \emph{individual hazard mechanism} taking values in a measurable mechanism space \(\mathcal{M}\), and let \(\mathcal H(\Theta)=h_\Theta\) denote the induced individual hazard function.
Let \(X\) denote observable covariates.
We assume that there exists a \emph{reference hazard mechanism} with hazard function \(h_0\) and associated survival function
\[
    S_0(t)=\exp\!\left\{-\int_0^t h_0(s)\,ds\right\}.
\]
The defining assumption of AFT models, in the present framework, is that all individual hazard mechanisms are generated from this reference mechanism through time–scale transformations.

\begin{proposition}[Mechanism-based generative structure of AFT models]
\label{prop:aft-mechanism-conditional}
    Assume that, conditional on \(X=x\), the individual hazard mechanism \(\Theta\) is generated through a pure time–scaling transformation driven by a positive random variable \(U>0\), namely
    \[
        \Theta | X=x \;\stackrel{d}{=}\; \Phi_x(U),
        \qquad U \sim P_U,
    \]
    where the mapping \(\Phi_x\) satisfies
    \[
        \mathcal H(\Phi_x(u))(t)
        =
        \frac{1}{u\,a(x)}\,
        h_0\!\left(\frac{t}{u\,a(x)}\right),
        \qquad t \ge 0,
    \]
    and \(a(x)>0\) is a covariate–dependent time–scaling function.
    
    Under this mechanism structure, the individual survival time conditional on \(X=x\) admits the representation
    \[
        T | X=x
        =
        a(x)\,U \cdot T_0,
        \qquad
        T_0 \sim S_0.
    \]
    Defining \(\varepsilon := \log T_0\), we obtain
    \[
        \log T
        =
        \log a(X) + \log U + \varepsilon.
    \]
    
    In this formulation, the conditional mechanism distribution \(\mathbb{P}(\Theta | X)\) is entirely determined by the reference hazard \(h_0\), the time–scaling map \(\Phi_x\), and the distribution of \(U\).
    Consequently, the distribution of the error term \(\varepsilon\) is induced by the reference mechanism through a logarithmic time transformation, rather than being an independent modeling choice.
\end{proposition}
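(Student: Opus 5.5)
Since $\Theta$ is fixed through the relation $\Theta\,|\,X=x \stackrel{d}{=} \Phi_x(U)$, the plan is to compute the conditional survival function of a single mechanism $\Phi_x(u)$ in closed form, read off the conditional law of $T$ given $(U=u,X=x)$ as a rescaling of $T_0$, and then integrate over $U$ using mechanism sufficiency ($T\perp X\mid\Theta$) together with the representation in Theorem~\ref{thm:representation}(ii).

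\textbf{Step 1: the survival curve of a single mechanism.} For fixed $u>0$ and $x$, the mechanism $\theta=\Phi_x(u)$ has hazard $h_\theta(t)=\frac{1}{u a(x)}h_0\!\left(\frac{t}{u a(x)}\right)$. I will substitute $s'=s/(u a(x))$ in the cumulative hazard to get
\[
H_\theta(t)=\int_0^t \frac{1}{u a(x)}h_0\!\left(\frac{s}{u a(x)}\right)ds=\int_0^{t/(u a(x))}h_0(s')\,ds',
\]
so that $S_\theta(t)=S_0\bigl(t/(u a(x))\bigr)$. The only regularity needed is local integrability of $h_0$, which is already assumed in Section~\ref{sec:individual_hazard_mechanism}.

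\textbf{Step 2: the conditional law of $T$.} By mechanism sufficiency, $\mathbb P(T>t\mid \Theta=\Phi_x(u),X=x)=S_0\bigl(t/(u a(x))\bigr)$. This is exactly the survival function of $u\,a(x)\,T_0$ when $T_0\sim S_0$. I will then construct, or take on a suitable enlargement of the probability space, a variable $T_0\sim S_0$ independent of $U$ given $X=x$, and mix over $U\sim P_U$ as in Theorem~\ref{thm:representation}(ii). This gives
\[
\mathbb P(T>t\mid X=x)=\int S_0\!\left(\tfrac{t}{u a(x)}\right)P_U(du)=\mathbb P\bigl(a(x)UT_0>t\bigr),
\]
that is, $T\mid X=x\stackrel{d}{=}a(x)\,U\,T_0$.

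\textbf{Step 3: the log-linear form and the interpretation.} Taking logarithms, which is legitimate because all three factors are positive almost surely, gives $\log T=\log a(X)+\log U+\varepsilon$ with $\varepsilon=\log T_0$. The law of $\varepsilon$ is $\mathbb P(\varepsilon\le s)=1-S_0(e^s)$, so it is a functional of $h_0$ alone. The conditional mechanism distribution is the pushforward $P_U\circ\Phi_x^{-1}$, which is determined by $(h_0,\Phi_x,P_U)$. Both concluding claims of the proposition follow from these two observations.

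\textbf{Main obstacle.} The delicate point is Step 2. The statement writes an equality $T\mid X=x=a(x)UT_0$, but the framework only pins down distributions. I would therefore state the result as an equality in conditional distribution, possibly realized on an enlarged probability space, and I would check explicitly that $T_0$ can be taken independent of $U$. That independence is what makes the mixture integral factor correctly. Everything else is a change of variables.
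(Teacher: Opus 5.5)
Your proposal is correct and follows the paper's proof essentially step for step: the change of variables giving $S_{\Phi_x(u)}(t)=S_0\bigl(t/(u\,a(x))\bigr)$, matching this with the survival function of $u\,a(x)\,T_0$, mixing over $U$ with $T_0$ independent of $U$, and reading off $\mathbb P(\varepsilon\le z)=1-S_0(e^z)$. Treating the representation as an equality in conditional distribution on a possibly enlarged space, and constructing the independence of $T_0$ and $U$ rather than asserting it, is a slightly more careful version of the step the paper states in one line.
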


\begin{remark}[Compression of the conditional mechanism distribution]
\label{rem:aft-compression}
    Proposition~\ref{prop:aft-mechanism-conditional} clarifies the fundamental assumptions underlying AFT models at the mechanism level.
    The existence of a single reference hazard mechanism constitutes the primary structural assumption, while the classical error term \(\varepsilon\) arises as a derived quantity.
    
    In particular, all heterogeneity conditional on \(X=x\) is compressed into a one-dimensional time–scaling factor \(U\), so that the conditional mechanism distribution \(\mathbb{P}(\Theta | X=x)\) is confined to the orbit
    \[
        \left\{
            t \mapsto \frac{1}{u}\,h_0\!\left(\frac{t}{u}\right)
            \;:\;
            u>0
        \right\}.
    \]
    As a result, AFT models do not allow for genuine heterogeneity in hazard shape, timing, or modality across individuals.
    All admissible hazard trajectories differ only by acceleration or deceleration along a common reference shape.
    
    This strong compression of the conditional mechanism distribution explains both the interpretability and the limitations of AFT models.
    While the resulting structure facilitates estimation and interpretation, it does not resolve the fundamental non-identifiability of \(\mathbb{P}(\Theta | X)\) discussed in Section~3.
    These two aspects are inseparable consequences of the same structural choice.
\end{remark}

\paragraph{Methodological implications}

The mechanism-based formulation presented above highlights that innovation in survival modeling need not be confined to alternative regression specifications or estimation techniques.
Instead, new model classes may be obtained by relaxing or modifying the structural assumptions imposed on the conditional mechanism distribution \(\mathbb{P}(\Theta | X)\).

From this perspective, generalized AFT models may be constructed by allowing multiple reference mechanisms, multi-dimensional time–scaling factors, or departures from pure time rescaling.
Importantly, such generalizations operate at the level of the mechanism space, rather than through ad hoc modifications of the regression form.
This viewpoint provides a principled framework for balancing interpretability, flexibility, and identifiability in the development of future survival models.

\paragraph{Proof of Proposition \ref{prop:aft-mechanism-conditional}}

\begin{proof}
    Conditioning on \(X=x\), the assumption \(\Theta | X=x \stackrel{d}{=} \Phi_x(U)\) together with the definition of the mapping \(\Phi_x\) implies that, conditional on \(U=u\) and \(X=x\), the individual hazard function takes the form
    \[
        h_{\Theta}(t | U=u, X=x) = \frac{1}{u\,a(x)}\, h_0\!\left(\frac{t}{u\,a(x)}\right).
    \]
    By the deterministic relationship between the hazard and the cumulative hazard, we obtain
    \[
        \int_0^t h_{\Theta}(s | U=u, X=x)\,ds
        =
        \int_0^{t/(u\,a(x))} h_0(v)\,dv
        =
        H_0\!\left(\frac{t}{u\,a(x)}\right),
    \]
    where \(H_0(t)=\int_0^t h_0(s)\,ds\).
    Consequently, the conditional survival function satisfies
    \[
        S(t | U=u, X=x)
        =
        \exp\!\left\{-H_0\!\left(\frac{t}{u\,a(x)}\right)\right\}
        =
        S_0\!\left(\frac{t}{u\,a(x)}\right).
    \]
    Let \(T_0\) be a random variable with survival function \(S_0\), that is, \(\mathbb P(T_0>t)=S_0(t)\).
    For any \(t\ge 0\),
    \[
        \mathbb P\bigl(a(x)\,u\,T_0 > t\bigr)
        =
        \mathbb P\!\left(T_0>\frac{t}{a(x)\,u}\right)
        =
        S_0\!\left(\frac{t}{a(x)\,u}\right),
    \]
    which coincides with the conditional survival function \(S(t | U=u, X=x)\) derived in the previous step.
    Therefore, in distribution,
    \[
        T | (U=u, X=x) \;\stackrel{d}{=}\; a(x)\,u\,T_0,
    \]
    and hence
    \[
        T | X=x = a(x)\,U \cdot T_0, \qquad T_0 \sim S_0,
    \]
    with \(T_0\) independent of \(U\).
    
    Taking logarithms of the above representation and defining \(\varepsilon := \log T_0 \), we obtain
    \[
        \log T = \log a(X) + \log U + \varepsilon.
    \]
    Since \(T_0\) has survival function \(S_0\), for any \(z\in\mathbb R\),
    \[
        \mathbb P(\varepsilon \le z) = \mathbb P(T_0 \le e^z) = 1 - S_0(e^z).
    \]
    Thus, the distribution of \(\varepsilon\) is uniquely determined by \(S_0\) (equivalently, by the reference hazard \(h_0\)) through the logarithmic time transformation.
    
    Combining the above arguments, we conclude that under the proposed mechanism-based generative structure, the conditional mechanism distribution \(\mathbb P(\Theta | X)\) is fully induced by the reference mechanism \(h_0\), the time--scaling map \(\Phi_x\), and the distribution of \(U\).
    Accordingly, the error distribution in the classical AFT representation is not an independent modeling choice, but a direct consequence of the underlying mechanism structure.
\end{proof}

\end{document}